\newlist{myEnumerate}{enumerate}{9}
\setlist[myEnumerate,1,2,3,4,5,6,7,8,9]{label*=\arabic*.} 
\setlist[itemize]{label=\textbullet}
\setlist[itemize,2]{label=--}
\setlist[itemize,3]{label=*}
\setlist[itemize,5]{label=--}
\setlist[itemize,6]{label=*}
\setlist[itemize,8]{label=--}
\DeclareRobustCommand{\bigDelta}{\mathop{\vphantom{\sum}\mathpalette\bigDelta@\relax}\slimits@}
\newcommand{\bigDelta@}[2]{\vcenter{\sbox\z@{#1\sum}\hbox{\resizebox{.9\dimexpr\ht\z@+\dp\z@}{!}{\m@th\Delta}}}}
\newtheorem{theorem}{Theorem}[section]
\newenvironment{thm}{$\vspace{-0.5em}$\begin{theorem}}{\hfill$\diamond$\end{theorem}}
\crefname{thm}{theorem}{theorems}
\newtheorem{corollary}[theorem]{Corollary}
\newenvironment{cor}{$\vspace{-0.5em}$\begin{corollary}}{\hfill$\diamond$\end{corollary}}
\crefname{cor}{corollary}{corollaries}
\newtheorem{lemma}[theorem]{Lemma}
\newenvironment{lem}{$\vspace{-0.5em}$\begin{lemma}}{\hfill$\diamond$\end{lemma}}
\crefname{lem}{lemma}{lemmas}
\newtheorem{definition}[theorem]{Definition}
\newenvironment{mydef}{$\vspace{-0.5em}$\begin{definition}}{\hfill$\diamond$\end{definition}}
\crefname{mydef}{definition}{definitions}
\crefname{definition}{definition}{definitions}
\newtheorem{problem}[theorem]{Problem}
\newenvironment{pbm}{$\vspace{-0.5em}$\begin{problem}}{\hfill$\diamond$\end{problem}}
\crefname{pbm}{problem}{problems}
\crefname{problem}{problem}{problems}
\newcommandx{\unsure}[2][1=]{\todo[linecolor=red,backgroundcolor=red!25,bordercolor=red,#1]{#2}}
\newcommandx{\change}[2][1=]{\todo[linecolor=blue,backgroundcolor=blue!25,bordercolor=blue,#1]{#2}}
\newcommandx{\info}[2][1=]{\todo[linecolor=OliveGreen,backgroundcolor=OliveGreen!25,bordercolor=OliveGreen,#1]{#2}}
\newcommandx{\improvement}[2][1=]{\todo[linecolor=Plum,backgroundcolor=Plum!25,bordercolor=Plum,#1]{#2}}
\newcommandx{\thiswillnotshow}[2][1=]{\todo[disable,#1]{#2}}
\newcommand{\DefFormat}[1]{\texttt{\textsc{#1}}}
\newcommand{\NP}{\ensuremath{\mathbb{NP}}\xspace}
\newcommand{\Poly}{\ensuremath{\mathbb{P}}\xspace}
\newcommand{\EDP}{\DefFormat{EDP}\xspace}
\newcommand{\EDPDT}{\DefFormat{EDPDT}\xspace}
\newcommand{\BellVM}{\DefFormat{BellVM}\xspace}
\newcommand{\BellQM}{\DefFormat{BellQM}\xspace}
\newcommand{\LCLPMCC}{\ensuremath{\mathrm{LC}+\mathrm{LPM}+\mathrm{CC}}\xspace}
\newcommand{\bs}[1]{\ensuremath{\mathbf{#1}}}
\begin{document}
\title{Transforming graph states to Bell-pairs is NP-Complete}
\author{Axel Dahlberg Jonas Helsen Stephanie Wehner}
\date{QuTech - TU Delft, Lorentzweg 1, 2628CJ Delft, The Netherlands\\[2ex]
\today }

\maketitle
\begin{abstract}
    Critical to the construction of large scale quantum networks, i.e.
a quantum internet, is the development of fast algorithms for managing entanglement present in the network.
    One fundamental building block for a quantum internet is the distribution of Bell pairs between distant nodes in the network.
    Here we focus on the problem of transforming multipartite entangled states into the tensor product of bipartite Bell pairs between specific nodes using only a certain class of local operations and classical communication.
    In particular we study the problem of deciding whether a given graph state, and in general a stabilizer state, can be transformed into a set of Bell pairs on specific vertices using only single-qubit Clifford operations, single-qubit Pauli measurements and classical communication.
    We prove that this problem is \NP-Complete.
    %However we also show that if the number of Bell pairs of the target state is fixed and the graph state is in certain class the problem can be decided in polynomial time in the number of qubits of the graph state.
\end{abstract}

% \tableofcontents

\section{Introduction}
Entanglement takes center stage in the modern understanding of quantum mechanics.
Apart from its usefulness as a theoretical tool, entanglement can also be seen as a resource that can be harnessed for secure communication and many other tasks, see e.g. ~\cite{Wehner2018stages}, not achievable by any protocol using only classical communication.
One can imagine a network of quantum-enabled nodes, a quantum internet, generating entanglement and harnessing it to perform tasks.
However, entanglement over large distances is very difficult to produce and even with rapidly improving technology, the amount of entanglement available in a network will for the foreseeable future be the limiting factor when performing the tasks mentioned above.
Moreover entanglement comes in different classes that are not necessarily mutually inter-convertible by operations performed locally on the nodes.
These two considerations: (1) the scarcity of entanglement as a resource and (2) the lack of local inter-convertibility of classes of entanglement sets the stage for the present work.
In this paper we assume that we already have some existing shared entangled state in a quantum network and we ask the question of whether this state can be transformed into a set of Bell pairs between specific nodes, using only a restricted set of local operations.
% The scarcity of entanglement means that it is imperative to recycle existing shared entanglement (perhaps left over from some previously performed task, or generated by some probabilistic process) into entanglement that is useful for a new task.
Examples of such a situation can be found in~\cite{pant2019routing,pant2019perculation}, where an approach is presented of first probabilistically generating a large graph state and then transforming this to the desired target state using local operations.
In~\cite{pant2019routing} this target state is precisely a set of Bell pairs between multiple pairs of nodes.

Any decision on this transformation process must be made fast, since entanglement decays over time~\cite{NielsenChuang}.
Hence there is a need for fast algorithms to decide whether different entangled states can be converted into each other by local operations.
In~\cite{vanmeter2018butterfly}, measurement-based quantum network coding was introduced, where one step in the procedure includes transforming general graph states into Bell-pairs using single-qubit Clifford operations, single-qubit Pauli measurement and classical communication: \LCLPMCC.
However, the computational complexity of finding the correct operations or even deciding if it can be done, was never mentioned.
In this work we answer this question.
Specifically we study the mathematic problem of deciding whether a given 'graph state' can be decomposed into a series of Bell pairs on specific vertices, using only \LCLPMCC.
In particular we determine the computational complexity of this problem.
We consider this set of operations since they are on many hardware platforms the simplest and fastest operations to perform.
For example on Nitrogen-Vacancies in diamond, single-qubit gates on one of the qubits (electron) is many orders faster than two-qubits gates and single-qubit gates on the other qubits (nuclear spins)~\cite{Kalb2017}.
Furthermore, restricting to only Cliffords, instead of allowing general unitary operations, is natural on systems where, for example, the qubits are logical qubits of some error-correcting code.
Since for many quantum error-correction codes, doing non-Clifford operations is a costly process involving, in many cases, the consumption of magic states.
While it is true that the high relative cost of logical non-Clifford gates is not a universal phenomena, for instance 3D color codes~\cite{bombin2007exact,bombin2015gauge} have the ability to perform transversal T-gates, it is the case for many popular codes.
% There are some quantum error-correction codes where this is not the case but only time will tell which ones will be the most relevant in a future quantum network.
% We study this problem in particular because graph states are a well-studied and useful class of entangled states, and Bell pairs are in a sense the fundamental elements of a quantum network~\cite{dahlberg2019linklayer}.
Moreover, over the years many techniques from graph theory have been put to bear on specifically the behavior of graph states under \LCLPMCC~\cite{hein2006survey,VandenNest2004graphical,dahlberg2018courcelle,dahlberg2018long,hahn2018quantum,dahlberg2019iso}.
This gives us the ability to ask very precise questions and also answer them.

Our main result is that we prove that the problem of deciding whether a given graph state can be converted into Bell Pairs using only \LCLPMCC\ (\BellVM) is in general \NP-Complete.
This means that unless $\Poly=\NP$, there will be no efficient algorithm for solving this problem on arbitrary graph states.
%However \jonas{Is this true?}, as a corollary we also prove that if the number of desired Bell pairs is fixed in advance to be $k$ and the input graph state is of a special type then there is an efficient algorithm to decide  whether that graph state can be converted into $k$ Bell pairs.
In order to prove our results we make heavy use of results in algorithmic graph theory, and in fact we prove new graph theoretical results in the process of proving our main theorem.
% In~\cite{dahlberg2018long,dahlberg2019iso} we have shown similar results where instead the target state is a GHZ-state (star graph).

\section*{Related work}
The study of single-qubit operations on graph states started with the work by Raussendorf and Briegel in~\cite{Raussendorf2001persistent} and lead to the development of measurement-based quantum computing~\cite{Raussendorf2001oneway}.
In~\cite{VandenNest2004graphical} Van den Nest et al. showed how the action of single-qubit Clifford operations can be described purely by the graph operation \emph{local complementation} on the corresponding graph.
Bouchet had developed back in 1991 (~\cite{Bouchet1991efficient}) an efficient algorithm to decide if two graphs are equivalent under local complementations which was used by Van den Nest et al. in~\cite{VandenNest2004efficient} to present an efficient algorithm for deciding the equivalence of graph states under single-qubit Clifford operations.
Adding single-qubit Pauli measurements and classical communication to the set of operations is equivalent to also allow for vertex-deletion on the corresponding graph, next to the local complementations~\cite{dahlberg2018courcelle}.
Graphs that can be reached by the action of local complementations and vertex-deletions is a known concept in graph theory and are called vertex-minors~\cite{Oum2005rankwidth,Courcelle2007vertexminor}.
Using the fact that the graph states reachable from a graph state $\ket{G}$ using \LCLPMCC\ are exactly captured by the vertex-minors of $G$, we have previously shown that deciding if a GHZ-state can be reached from a given graph state using \LCLPMCC\ is \NP-Complete, both if (1) the GHZ-state is on a given set of vertices~\cite{dahlberg2018long} and (2) the GHZ-state is allowed to be on any subset of a fixed size~\cite{dahlberg2019iso}.

Due to the different \emph{target state} considered in this paper, the computational complexity of \BellVM does not follow from~\cite{dahlberg2018long}.
Concretely, a tensor-product of Bell pairs is not equivalent to a GHZ-state under the operations studied here.
For this reason we will here perform a different reduction to prove that \BellVM is \NP-complete compared to the one used to prove the same for the problem in~\cite{dahlberg2018long}.
In~\cite{dahlberg2018long} we reduced from the problem of finding a Hamiltonian tour on a three-regular graph.
Here on the other hand we reduce from the problem of finding disjoint paths between sources and sinks in a graph.
While doing so we also show new results, purely related to the disjoint path problem in graph theory.

However, the fact that a problem is \NP-Complete does not mean that there is no efficient algorithm, if one allows to put certain restrictions on the input to the problem.
% However, even if a problem is \NP-Complete, one can often find efficient algorithms for certain restrictions of the problem.
Such a restriction is for example that a certain parameter $r$ of the input should be bounded.
If a problem can be solved in polynomial time on inputs where this parameter $r$ is bounded, the problem is said to be fixed-parameter tractable (in $r$).
% A general concept is that of fixed-parameter tractability, where an algorithm solving a hard problem is shown to have a runtime $\mathcal{O}(f(r)\cdot\text{poly}(n))$, where $f$ is some computable function, $r$ is some parameter of the input and $n$ is the size of the input.
Concretely, there might exist an algorithm solving the problem in time $\mathcal{O}(f(r)\cdot\text{poly}(n))$, where $f$ is a computable function and $n$ is the size of the input.
For \NP-Complete problems, $f(r)$ is necessarily super-polynomial in $n$, unless $\Poly=\NP$.
% Nonetheless, a fixed-parameter tractable problem can therefore be solved in polynomial time on inputs where the parameter $r$ is bounded.
Individual problems have been shown to be fixed-parameter tractable in various parameters.
However, Courcelle~\cite{Courcelle2011book} showed that a large class of graph problems are fixed-parameter tractable in what is called the rank-width~\cite{Oum2005rankwidth} of the graph.
In fact, any graph problem, expressible in a certain rich logic (MS)\footnote{Monadic second-order logic}, can be solved in time $\mathcal{O}(f(\mathrm{rwd}(G))\cdot\abs{V(G)}^3)$, where $\mathrm{rwd}(G)$ is the rank-width of $G$ and $\abs{V(G)}$ is the number of vertices of $G$.
% An extremely powerful result in this context is that of Courcelle~\cite{Courcelle2011book}, which states that any graph problem, expressible in a certain rich logic (MS)\footnote{Monadic second-order logic}, can be solved in time $\mathcal{O}(f(\mathrm{rwd}(G))\cdot\abs{V(G)}^3)$, where $\mathrm{rwd}(G)$ is the rank-width~\cite{Oum2005rankwidth} of $G$ and $\abs{V(G)}$ is the number of vertices of $G$.
In~\cite{Courcelle2007vertexminor} Courcelle and Oum showed that the vertex-minor problem is fixed-parameter tractability in the rank-width of the input graph by showing that the problem is expressible in MS.
% In~\cite{Courcelle2007vertexminor} Courcelle and Oum showed that the vertex-minor problem is expressible in MS and therefore that it is fixed-parameter tractability in the rank-width of the input graph.
The rank-width of a graph $G$ equals one plus the \emph{Schmidt-rank width} the graph state $\ket{G}$~\cite{VandenNest2007schmidt}.
Using these results, we applied Courcelle's theorem to the problem of transforming graph states under $\mathrm{LC}+\mathrm{LPM}+\mathrm{CC}$ in~\cite{dahlberg2018courcelle} and thus showed that this problem is fixed-parameter tractable in the Schmidt-rank width of the input graph state.

Courcelle's theorem states that there exists an efficient algorithm for graph states with bounded Schmidt-rank width.
However, a direct implementation of the algorithm from Courcelle's theorem is not usable in practice, due to a huge constant factor in the runtime~\cite{Langer2014practical}.
In~\cite{dahlberg2018long} we presented two efficient algorithms, not suffering from this huge constant factor, for the problem of deciding whether $\ket{G}$ can be transformed to $\ket{G'}$ using \LCLPMCC\ if: (1) $\ket{G'}$ is a GHZ-state and $\ket{G}$ has Schmidt-rank width 1~\cite{dahlberg2018long} and (2) $\ket{G'}$ is a GHZ-state of bounded size and $G$ is a circle graph~\cite{dahlberg2018long}.
We point out that the second algorithm is in fact not captured by Courcelle's theorem since circle graphs have unbounded rank-width~\cite{dahlberg2018long}.

Here we show a hardness proof for the case when the target graph is the tensor product of some number of Bell pairs on specified qubits.
However, our result is not only a negative one, since it can also give a hint for heuristic algorithms that can be used to solve the problem at hand, which do not suffer from a huge constant pre-factor in the runtime as for those based on Courcelle's theorem.
In particular, since we show the relation between \BellVM and certain problems related to finding disjoint paths in a graph, we can make use of results for solving the disjoint path problem also for \BellVM.
Even though the disjoint path problem is in general \NP-complete, there are many heuristic algorithms.
For example in~\cite{martin2020edp_heuristic}, a two-step approach using an integer linear program formulation of the problem with an evolutionary algorithm.
By using the reduction we show in this paper, one might be able to use the same approach for the problem at hand.

\section*{Overview}
In \cref{sec:graph_states} we review the basic theory of graph states and introduce the notion of a Bell-vertex-minor, a graph theoretical concept central to our results.
In \cref{sec:circle} we review the notion of circle graphs and related concepts which we make use of in \cref{sec:bellvmnpc}.
In \cref{sec:problems} we discuss the Edge-Disjoint path problem, a well-studied computational problem in algorithmic graph theory.
In \cref{sec:bellvmnpc} we show that a certain version of the Edge Disjoint Path problem can be polynomially reduced to \BellVM\ and in \cref{sec:edpnpc} that this version of the Edge Disjoint Path problem is \NP-Complete, implying that \BellVM\ is \NP-hard.
Finally, in \cref{sec:conclusion} we discuss the implications of our result.

\subsection*{Notation}
Graphs are assumed to be simple unless otherwise indicated. The vertex-set of a graph $G=(V,E)$ is denoted $V=V(G)$ and the edge-set is denoted $E=E(G)$. Given a vertex $v$ in a graph $G$ we denote the neighborhood of $v$ (the set of vertices adjacent to $v$ in $G$) by $N_v^{(G)}$. If it is clear which graph the neighborhood concerns we sometimes omit $G$ and simply write $N_v$. Given a graph $G$ and a subset of its vertices $V'$ we will denote the induced subgraph of $G$ on those vertices by $G[V']$.
We denote the fully connected graph on $n$ vertices as $K_n$.

Throughout this paper we use the following notation for sets of consecutive natural numbers
\begin{align}
    [k, n] &\equiv \{i\in\mathbb{N}: k\leq i\leq n\} \\
    [n] &\equiv \{i\in\mathbb{N}: 1\leq i\leq n\}
\end{align}

\subsection{Graph states}\label{sec:graph_states}
A graph state is a multi-partite quantum state $\ket{G}$ which is described by a graph $G$, where the vertices of $G$ correspond to the qubits of $\ket{G}$.
The graph state is formed by initializing each qubit $v\in V(G)$ in the state $\ket{+}_v=\frac{1}{\sqrt{2}}(\ket{0}_v+\ket{1}_v)$ and for each edge $(u,v)\in E(G)$ applying a controlled phase gate between qubits $u$ and $v$.
Importantly, all the controlled phase gates commute and are invariant under changing the control- and target-qubits of the gate.
This allows the edges describing these gates to be unordered and undirected.
Formally, a graph state $\ket{G}$ is given as
\begin{equation}
    \ket{G}=\prod_{(u,v)\in E(G)}C_Z^{(u,v)}\left(\bigotimes_{v\in V(G)}\ket{+}_v\right),
\end{equation}
where $C_Z^{(u,v)}$ is a controlled phase gate between qubit $u$ and $v$, i.e.
\begin{equation}
    C_Z^{(u,v)}=\ket{0}\bra{0}_u\otimes\mathbb{I}_v+\ket{1}\bra{1}_u\otimes Z_v
\end{equation}
and $Z_v$ is the Pauli-$Z$ matrix acting on qubit $v$.

%Any graph state is also a stabilizer state~\cite{hein2006survey}.
% In fact, any stabilizer state can be transformed to some graph state using only single-qubit Clifford operations.
% Furthermore, given a stabilizer state, a graph state equivalent under single-qubit Clifford operations can be found efficiently in time $\mathcal{O}(n^3)$~\cite{}.
%The GHZ states are an important class of stabilizer states given as

    %\ket{\mathrm{GHZ}}_k=\frac{1}{\sqrt{2}}\left(\ket{0}^{\otimes k}+\ket{1}^{\otimes k}\right).
%\end{equation}
% It is easy to verify that any graph state given by a star or complete graph, i.e. $\ket{S_{V,c}}$ or $\ket{K_V}$, can be turned into a GHZ state on the qubits $V$ using only single-qubit Clifford operations.
% Furthermore, it is easy to see\footnote{This follows from the fact that no other graph is LC-equivalent to the star or complete graph and that graph states are single-qubit Clifford if and only if their corresponding graphs are LC-equivalent, see \cref{sec:LCandVD}} no other graph states are single-qubit Clifford equivalent to the GHZ-states.

One advantage of considering graph states is that certain problems can be completely expressed in the language of graph theory, where one can make use of powerful existing tools and techniques.
%In the next section we discuss local complementations and vertex-deletions on graph states.
For example, what we make use of here, is the fact that single-qubit Clifford operations (LC), single-qubit Pauli measurements (LPM) and classical communication (CC): \LCLPMCC, which take graph states to graph states, can be completely characterized by local complementations and vertex-deletions on the corresponding graphs~\cite{dahlberg2018courcelle}.
Local complementation is defined as follows.

\begin{mydef}[Local complementation]\label{def:LC}
    A local complementation $\tau_v$ is a graph operation specified by a vertex $v$, taking a graph $G$ to $\tau_v(G)$ by replacing the induced subgraph on the neighborhood of $v$, i.e.  $G[N_v]$, by its complement.
    The neighborhood of any vertex $u$ in the graph $\tau_v(G)$ is therefore given by
    % A local complementation $\tau_v$ is a graph operation, taking a graph $G$ to $\tau_v(G)$ with vertex-set $V(\tau_v(G))=V(G)$ such that the neighborhoods of $\tau_v(G)$ are given by
    % A local complementation $\tau_v$ acts on a vertex $v$ of a graph $G$ by complementing the induced subgraph on the neighborhood of $v$. The neighborhoods of the graph $\tau_v(G)$ are therefore given by
    \begin{equation}
    N_u^{(\tau_v(G))}=\begin{cases}N_u\Delta (N_v\setminus\{u\}) & \quad \text{if } (u,v)\in E(G) \\ N_u & \quad \text{else}\end{cases},
    \end{equation}
    where $\Delta$ denotes the symmetric difference between two sets.
    Given a sequence of vertices $\bs{v}=v_1\dots v_k$, we denote the induced sequence of local complementations, acting on a graph $G$, as
    \begin{equation}
        \tau_{\bs{v}}(G)=\tau_{v_k}\circ\dots\circ\tau_{v_1}(G).
    \end{equation}
\end{mydef}

% More concretely, any sequence of single-qubit Clifford operations, mapping graph states to graph states, can be described as some sequence of local complementations on the corresponding graph~\cite{VandenNest2004graphical}.
The action of a local complementation on a graph induce the following sequence of single-qubit Clifford operations on the corresponding graph state
\begin{equation}\label{eq:LCU}
	U_v^{(G)}=\exp\left(-\mathrm{i}\frac{\pi}{4}X_v\right)\prod_{u\in N_v}\exp\left(\mathrm{i}\frac{\pi}{4}Z_u\right),
\end{equation}
where $X_v$ and $Z_v$ are the Pauli-X and Pauli-Z matrices acting on qubit $v$ respectively.
Concretely, $U_v^{(G)}$ has the following action on the graph state $\ket{G}$
\begin{equation}
    U_v^{(G)}\ket{G}=\ket{\tau_v(G)}.
\end{equation}

Measuring qubit $v$ of a graph state $\ket{G}$ in the Pauli-$X$ (or Pauli-$Y$, Pauli-$Z$ basis), gives a stabilizer state that is single-qubit Clifford equivalent to a graph state $\ket{G'}$, where $G'$ can be reached from $G$ by a sequence of local complementations and vertex-deletions~\cite{hein2006survey}. %$\ket{X_v(G)}$, $\ket{Y_v(G)}$, $\ket{Z_v(G)}$ respectively.
%The operations $X_v$, $Y_v$ and $Z_v$ are graph operations consisting of sequences of local complementations together with the deletion of vertex $v$, for more details see~\cite{dahlberg2018long}.%which we define in \cref{def:XYZ}.
%As mentioned the post-measurement state of for example a Pauli-$X$ measurement on qubit $v$ is only single-qubit Clifford equivalent to the graph state $\ket{X_v(G)}$.
%The single-qubit Clifford operations that take the post-measurement state to $\ket{X_v(G)}$ depend on the outcome of the measurement of the qubit $v$ and act on qubits adjacent to $v$~\cite{hein2006survey}.
The state operations taking the post-measurement state to $\ket{G'}$ depends on the measurement outcome and acts on $v$ and its neighborhood.
This means classical communication is required to announce the measurement result at the vertex $v$ to its neighboring vertices.
Details can be found in~\cite{dahlberg2018long}, where we introduced to notion of a \emph{qubit-minor} which captures exactly which graph states can be reached from some initial graph state under  \LCLPMCC.
Formally we define a qubit-minor as:

\begin{mydef}[Qubit-minor~\cite{dahlberg2018long}]\label{def:QM} Assume $\ket{G}$ and $\ket{G'}$ are graph states on the sets of qubits $V$ and $U$ respectively.
    $\ket{G'}$ is called a qubit-minor of $\ket{G}$ if there exists an adaptive sequence of single-qubit Clifford operations (LC), single-qubit Pauli measurements (LPM) and classical communication (CC) that takes $\ket{G}$ to $\ket{G'}$, i.e.
    \begin{equation}
        \ket{G}\xrightarrow[\mathrm{LPM}+\mathrm{CC}]{\mathrm{LC}}\ket{G'}\otimes\ket{\mathrm{junk}}_{V\setminus U}.
    \end{equation}
    If $\ket{G'}$ is a qubit-minor of $\ket{G}$, we denote this as
    \begin{equation}
        \ket{G'}<\ket{G}.
    \end{equation}
\end{mydef}

In~\cite{dahlberg2018long} we have shown that the notion of qubit-minors for graph states is equivalent to the notion of \emph{vertex-minors} for graphs.

\begin{mydef}[Vertex-minor]\label{def:vertex-minor}
	A graph $G'$ is called a vertex-minor of $G$ if and only if there exist a sequence of local complementations and vertex-deletions that takes $G$ to $G'$~\cite{Oum2005rankwidth,Courcelle2007vertexminor}.
    Since vertex-deletions can always be performed last in such a sequence (see~\cite{dahlberg2018courcelle}), an equivalent definition is the following: A graph $G'$ is called a vertex-minor of $G$ if and only if there exist a sequence of local complementations $\bs{v}$ such that $\tau_{\bs{v}}(G)[V(G')]=G'$.
    If $G'$ is a vertex-minor of $G$ we write this as
    \begin{equation}
        G'<G.
    \end{equation}
    % and if $G'$ is not a vertex-minor of $G$ then
    % \begin{equation}
    %     G'\nless G.
    % \end{equation}
\end{mydef}

% The relation between qubit-minor and vertex-minor is captured by the following theorem, for a proof see~\cite{dahlberg2018long}.
The relation between qubit-minor and vertex-minor is captured by the following theorem.

\begin{thm}[\cite{dahlberg2018long}]\label{thm:QMVM}
    Let $\ket{G}$ and $\ket{G'}$ be two graph states such that no vertex in $G'$ has degree zero.
    Then $\ket{G'}$ is a qubit-minor of $\ket{G}$ if and only if $G'$ is a vertex-minor of $G$, i.e.
    \begin{equation}
        \ket{G'}<\ket{G}\quad\Leftrightarrow\quad G'<G.
    \end{equation}
\end{thm}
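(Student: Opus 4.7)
The plan is to prove the two directions separately. The forward direction is the constructive one: assume $G'<G$, so by \cref{def:vertex-minor} there is a sequence of vertices $\bs{v}$ with $\tau_{\bs{v}}(G)[V(G')]=G'$. I would first apply the single-qubit Cliffords $U_{v_1}^{(G_0)},U_{v_2}^{(G_1)},\dots$ from \cref{eq:LCU} in order, using the identity $U_v^{(G)}\ket{G}=\ket{\tau_v(G)}$ step by step to transform $\ket{G}$ into $\ket{\tau_{\bs{v}}(G)}$. Then, for each vertex $u\in V(G)\setminus V(G')$, I would measure qubit $u$ in the Pauli-$Z$ basis. A $Z$-measurement of a graph-state qubit leaves the remaining qubits in a state that, conditioned on the outcome, differs from $\ket{G_u-u}$ only by a tensor product of Pauli-$Z$ operators on the former neighbors of $u$; these corrections can be broadcast by classical communication and applied to recover exactly the graph state of the vertex-deleted graph. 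Iterating this yields $\ket{G'}$ on $V(G')$, producing the required qubit-minor protocol.

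For the backward direction, assume $\ket{G'}<\ket{G}$ via some adaptive \LCLPMCC\ protocol. I would argue by induction on the number of elementary operations that, after each step, the state is single-qubit Clifford equivalent to some graph state whose underlying graph is a vertex-minor of $G$. The Clifford steps preserve this property trivially since they only change the local-Clifford representative. For each Pauli measurement on a qubit $u$ in one of the three bases $X,Y,Z$, I would invoke the standard graph-state measurement rules (from~\cite{hein2006survey}): the post-measurement state is Clifford-equivalent to a graph state obtained from the current graph by a bounded sequence of local complementations at $u$ or a neighbor of $u$, followed by deletion of $u$, which is exactly a vertex-minor operation. The adaptive outcome-dependent corrections are absorbed into the single-qubit Clifford layer and do not affect the underlying graph.

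At the end of the protocol the state is $\ket{G'}\otimes\ket{\mathrm{junk}}_{V\setminus V(G')}$, and by the inductive claim the restriction to $V(G')$ is locally Clifford equivalent to $\ket{H}$ for some graph $H$ with $V(H)=V(G')$ that is a vertex-minor of $G$. To finish I would invoke the classification of single-qubit Clifford equivalence of graph states from Van den Nest et al.~\cite{VandenNest2004graphical}: whenever $\ket{G'}$ and $\ket{H}$ are Clifford-equivalent graph states and neither contains an isolated vertex, $G'$ and $H$ must be related by local complementations. Combined with $H<G$ and the fact that local complementation is itself a vertex-minor operation, this gives $G'<G$.

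The main obstacle is this last identification step in the backward direction. The reason is that the qubit-minor protocol only determines the final state up to single-qubit Cliffords on the output qubits, while vertex-minor is a strict property of graphs. The condition that no vertex of $G'$ has degree zero is precisely what is needed to prevent a graph state from being Clifford-equivalent to a graph state on a strictly different underlying graph (isolated vertices act as $\ket{+}$ and are indistinguishable from X-measured qubits). Carefully handling adaptivity so that the inductive invariant "locally Clifford equivalent to a graph state whose graph is a vertex-minor of $G$" really is preserved across every branch of the classical-communication tree is the second delicate point; I would do this by tracking the stabilizer description after each operation rather than the state vector itself.
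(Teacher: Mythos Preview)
The paper does not prove this theorem at all: it is stated with a citation to~\cite{dahlberg2018long} and then immediately used. There is therefore no ``paper's own proof'' to compare your proposal against; your sketch is essentially the standard argument one finds in the cited reference and in the graph-state literature (local complementations $\leftrightarrow$ single-qubit Cliffords via \cref{eq:LCU}, Pauli measurements $\leftrightarrow$ local complementations plus vertex-deletion via the rules in~\cite{hein2006survey}, and the Van den Nest classification~\cite{VandenNest2004graphical} to close the loop), so in that sense your approach is the expected one.

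One point worth tightening if you actually write this out: in your backward direction the inductive invariant is that the \emph{global} state on $V(G)$ is single-qubit Clifford equivalent to $\ket{H}$ for some $H<G$. At the end you have $\ket{G'}\otimes\ket{\mathrm{junk}}$, and you then want to restrict to $V(G')$. You need an extra sentence arguing that because the final state is a product across the cut $V(G')\,\vert\,V\setminus V(G')$, the graph $H$ (after suitable local complementations) has no edges across that cut, so $H[V(G')]$ is itself a vertex-minor of $G$ and $\ket{G'}$ is LC-equivalent to $\ket{H[V(G')]}$. Without that observation the step ``the restriction to $V(G')$ is locally Clifford equivalent to $\ket{H}$ for some $H<G$'' is not justified by your invariant as stated. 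Also, a minor labeling issue: what you call the ``forward'' direction is the $\Leftarrow$ implication of the displayed equivalence, not the $\Rightarrow$ one.
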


% This implies that to check whether a graph state has a certain qubit-minor we can check if the corresponding graph has a certain vertex-minor.
Note that one can also include the case where $G'$ has vertices of degree zero.
Let's denote the vertices of $G'$ which have degree zero as $I$.
We then have that
\begin{equation}
    \ket{G'}<\ket{G}\quad\Leftrightarrow\quad G'[V(G')\setminus I]<G.
\end{equation}

\Cref{thm:QMVM} is very powerful since it allows us to consider graph states under \LCLPMCC, purely in terms of vertex-minors of graphs.
We will therefore in the rest of this paper use the formalism of vertex-minors to study the computational complexity of transforming graph states using \LCLPMCC.
%and provide efficient algorithms for transforming graph state using \LCLPMCC.

\subsection{Bell vertex-minors}
In this paper we are interested in the question of whether a given graph state $\ket{G}$ can be transformed into some number of Bell pairs using \LCLPMCC\ between specific vertices.
We will denote the following Bell pair on qubits $a$ and $b$ as
\begin{equation}
    \ket{\Phi^+}_{ab} = \frac{1}{\sqrt{2}}(\ket{0}_a\otimes\ket{0}_b+\ket{1}_a\otimes\ket{1}_b).
\end{equation}
We formally define the following main problem of this paper.

\begin{pbm}[\BellQM]\label{pbm:bellqm}
	Given a graph state $\ket{G}$ and a set of disjoint pairs $B=\{\{p_1,p_1'\},\dots,\{p_k,p_k'\}\}$.
	Let $\ket{G_B}$ be the state following state consisting of Bell pairs between each pair of $B$
	\begin{equation}
		\ket{G_B}=\bigotimes_{\{p,p'\}\in B}\ket{\Phi^+}_{pp'}
	\end{equation}
	Decide if $\ket{G_B}$ is a qubit-minor of $G$.
    % Let $G_P$ be the graph with vertices $V(G_P)=\bigcup_{i\in[k]}\{p_i,p_i'\}$ and edges $E(G_P)=\bigcup_{i\in[k]}\{(p_i,p_i')\}$.
    % Decide if $G_P$ is a vertex-minor of $G$.
\end{pbm}

The graph state described by the complete graph on two vertices $K_2$ is single-qubit Clifford equivalent to each of the four Bell pairs since
\begin{equation}
    \ket{K_2} = \frac{1}{\sqrt{2}}(\ket{0}_a\otimes\ket{+}_b+\ket{1}_a\otimes\ket{-}_b) = H_b\ket{\Phi^+}_{ab}
\end{equation}
where $\ket{+}=(\ket{0}+\ket{1})/\sqrt{2}$, $H_b$ is a Hadamard gate on qubit $b$.
% \begin{equation} 
%     \ket{\Phi^+}_{ab} = \frac{1}{\sqrt{2}}(\ket{0}_a\otimes\ket{0}_b+\ket{1}_a\otimes\ket{1}_b).
% \end{equation}

Using \cref{thm:QMVM} we can turn the question of transforming graph states to Bell pairs using \LCLPMCC\, i.e. \BellQM, into the question of whether a disjoint union of $K_2$'s is a vertex-minor of some graph.
Formally we have the following problem graph problem.

\begin{pbm}[\BellVM]\label{pbm:bellvm}
    Given a graph $G$ and a set of disjoint pairs $B=\{\{p_1,p_1'\},\dots,\{p_k,p_k'\}\}$.
    Let $G_B$ be the graph with vertices $V(G_B)=\bigcup_{i\in[k]}\{p_i,p_i'\}$ and edges $E(G_B)=\bigcup_{i\in[k]}\{(p_i,p_i')\}$.
    Decide if $G_B$ is a vertex-minor of $G$.
\end{pbm}

To be precise $(\ket{G},B)$ is a YES-instance of \BellQM, if and only if $(G,B)$ is a YES-instance of \BellVM.
% \begin{equation}
%     \bigotimes_{i\in[k]}\ket{\Phi^+}_{p_i,p_i'}
% \end{equation}
% is a qubit-minor of the graph state $\ket{G}$ if and only if $G_P<G$, i.e. if and only if $(G_P, G)$ is a YES-instance of \BellVM.

\begin{figure}[H]
    \centering
    \begin{subfigure}{0.2\textwidth}
        \includegraphics[width=1\textwidth]{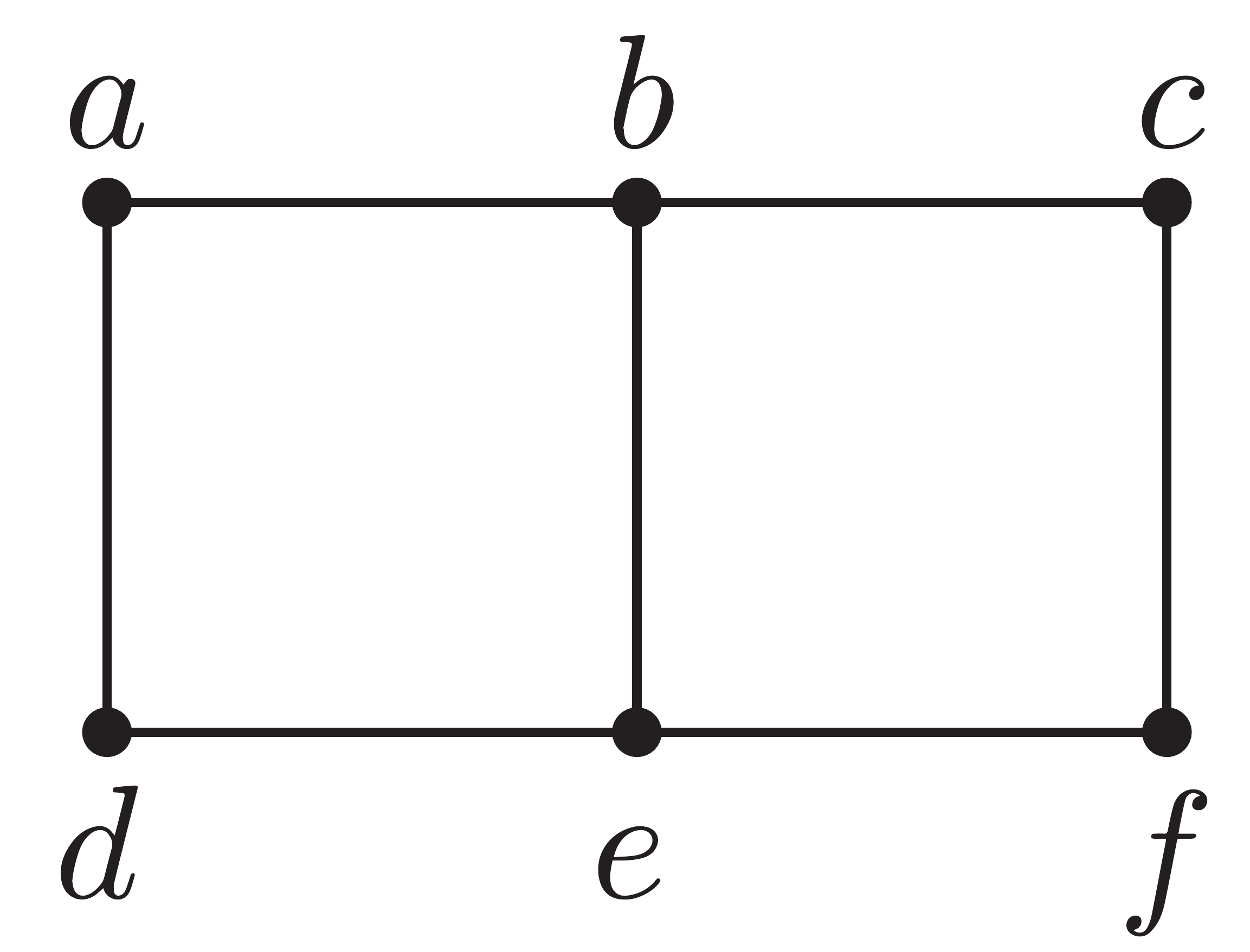}
        \caption{}
        \label{fig:bell_example_1}
    \end{subfigure}
    \raisebox{2ex}{$\quad\xrightarrow{\tau_b\circ\tau_e\circ\tau_b}\quad$}
    \begin{subfigure}{0.2\textwidth}
        \includegraphics[width=1\textwidth]{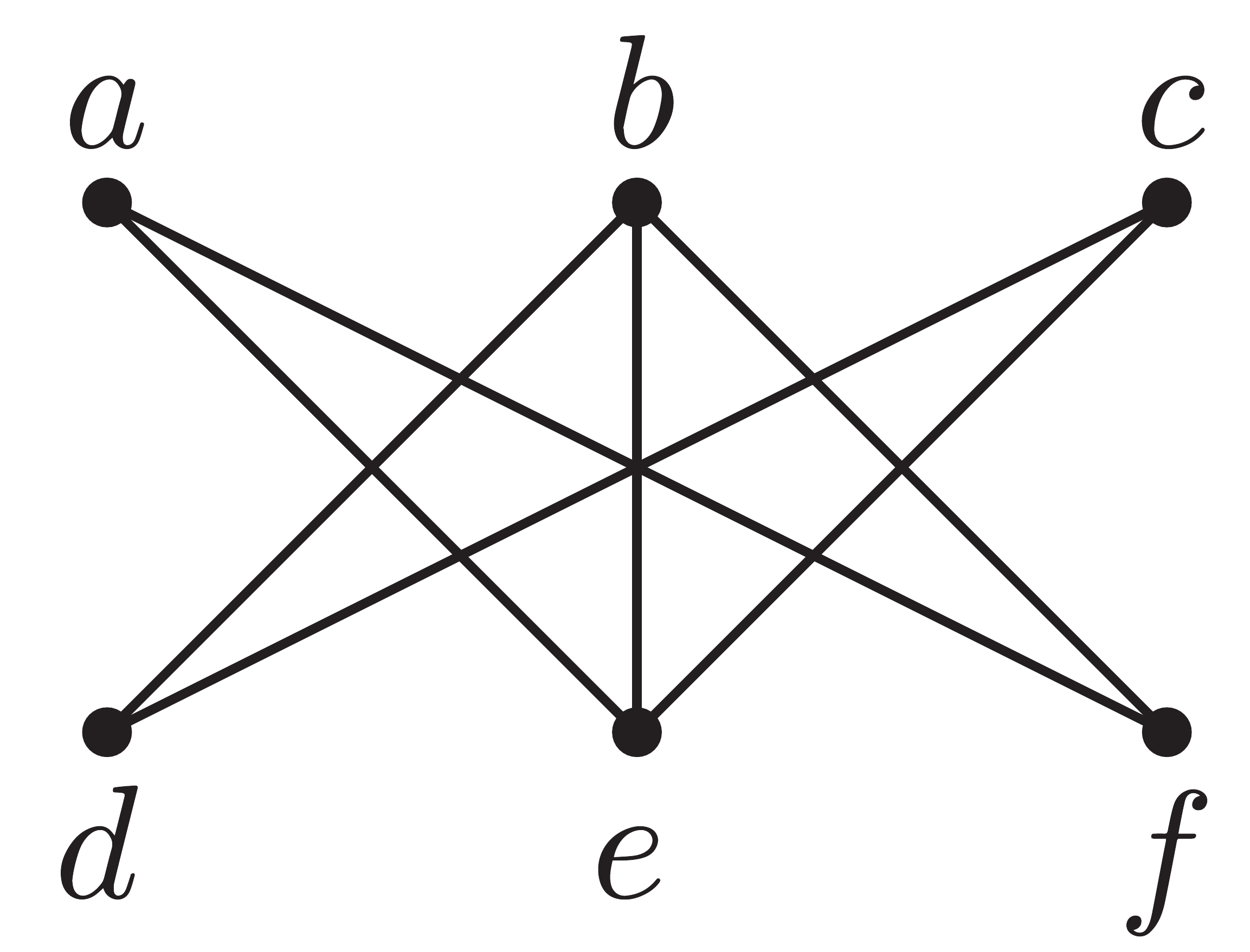}
        \caption{}
        \label{fig:bell_example_2}
    \end{subfigure}
    \raisebox{2ex}{$\quad\xrightarrow{\setminus\{b,e\}}\quad$}
    \begin{subfigure}{0.2\textwidth}
        \includegraphics[width=1\textwidth]{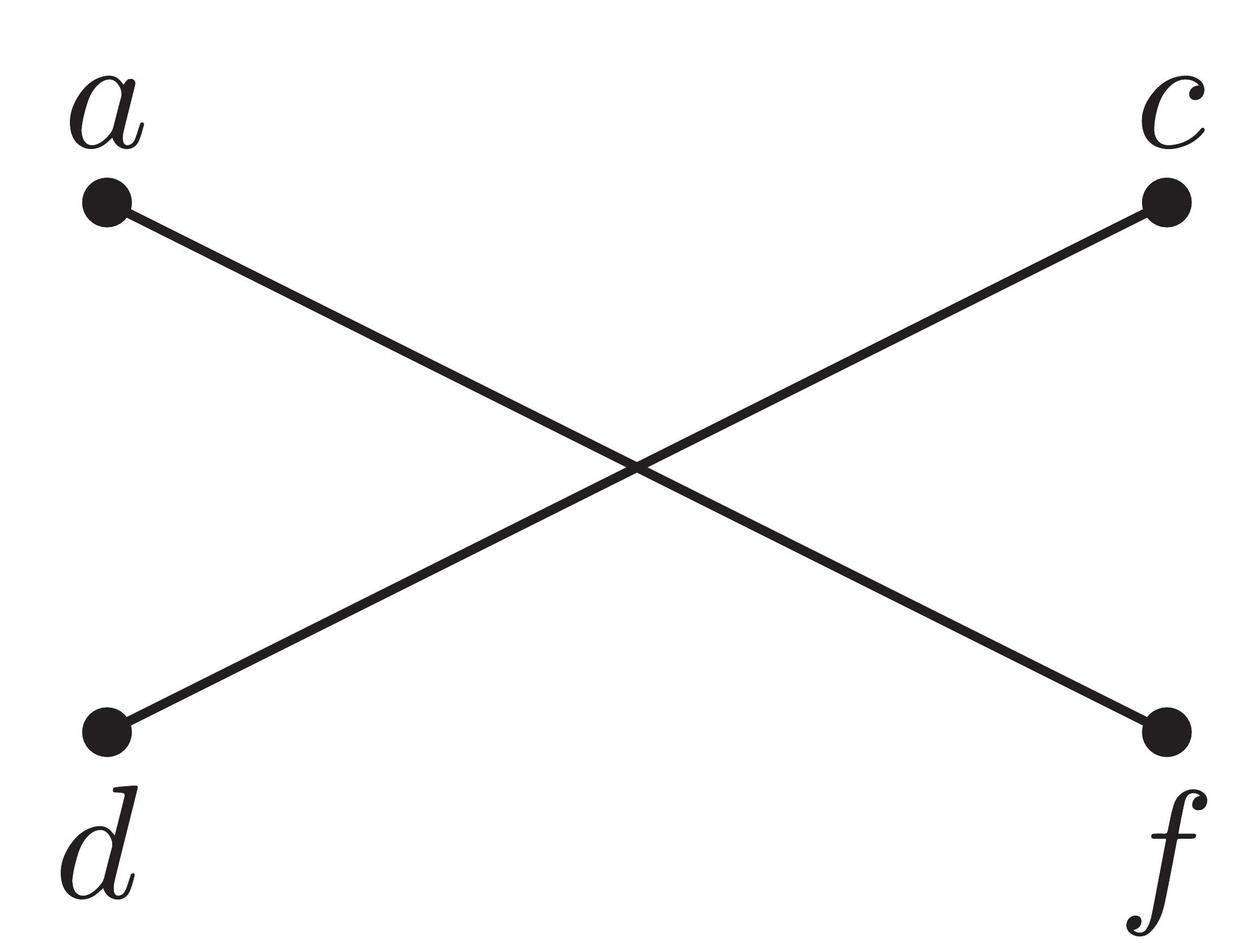}
        \caption{}
        \label{fig:bell_example_3}
    \end{subfigure}
    \caption{An example of a graph being transformed to a union of two $K_2$ graphs, using local complementations and vertex-deletions.
    The original graph in \cref{fig:bell_example_1} get transformed to the graph in \cref{fig:bell_example_2} by performing local complementations on the vertices $b$, $e$ and then $b$ again.
    Finally, when vertices $b$ and $e$ are deleted, the disjoint union of two $K_2$ graphs, on vertices $\{a,f\}$ and $\{c,d\}$ respectively, is reached (\cref{fig:bell_example_3}).}
    \label{fig:bell_example}
\end{figure}

\section{Circle graphs}\label{sec:circle}

Circle graphs are the graphs which can be described as a Eulerian tours on 4-regular multi-graphs, as described below.
What is interesting in this context is that circle graphs are equivalent under local complementations if and only if they can be described as Eulerian tours on the same 4-regular multi-graph~\cite{Bouchet1988graphicisotropic}.
We will make use of this property to prove that \BellVM\ is \NP-Complete, and therefore also \BellQM, in \cref{sec:bellvmnpc}.
Here we review circle graphs and certain properties we will later need.
For more details on circle graphs, see for example ~\cite{Kotzig1977,Bouchet1972,Bouchet1994circle}, the book by Golumbic~\cite{Golumbic2004} or ~\cite{dahlberg2018long} for the use of circle graphs in the context of graph states.

%Here we review circle graphs and representations of these under the action of local complementations.
%% Circle graphs are graphs with edges represented as intersections of chords on a circle.
%Circle graphs are also sometimes called alternance graphs since they can be described, as explained below, by a double-occurrence word such that the edges of the graph are the given by the alternances induced by this word, .
%We will make use of this description here, which was introduced by Bouchet in~\cite{Bouchet1972} and also described in~\cite{Bouchet1994circle}.
%This description is also related to yet another way to represent circle graphs, as Eulerian tours of $4$-regular multi-graphs, introduced by Kotzig in~\cite{Kotzig1977}. 
%For an overview of the theory and history of circle graphs see for example the book by Golumbic~\cite{Golumbic2004}.

A $4$-regular multi-graph is a graph where each vertex has exactly four incident edges and can contain multiple edges between each pair of vertices or edges only incident to a single vertex (self-loops).

\begin{mydef}[Eulerian tour]
	Let $F$ be a connected multi-graph.
	A \emph{walk} $W$ on $F$ is an alternating sequence of vertices and edges
	\begin{equation}
		W = v_1e_1v_2\dots e_{k}v_{k+1}
	\end{equation}
	such that $e_i$ is incident on $v_i$ and $v_{i+1}$ for $i\in[n]$.
	A tour on $F$ is a closed walk, i.e. $v_1=v_{k+1}$, without repeated edges.
	An Eulerian tour $U$ on $F$ is a tour that visits each edge in $F$ exactly once.
\end{mydef}

Any $4$-regular multi-graph is Eulerian, i.e. has a Eulerian tour, since each vertex has even degree~\cite{biggs1976graph}.

Furthermore, any Eulerian tour on a $4$-regular multi-graph $F$ traverses each vertex exactly twice, except for the vertex which is both the start and the end of the tour.
The order in which these vertices are traversed is captured by the \emph{induced double-occurrence word}.
%Such a Eulerian tour induces therefore a double-occurrence word, the letters of which are the vertices of $F$, and consequently a circle graph as described in the following definition.

\begin{mydef}[Induced double-occurrence word]\label{def:eul_tour}
    Let $F$ be a connected $4$-regular multi-graph on $k$ vertices $V(F)$.
    Let $U$ be a Eulerian tour on $F$ of the form
    \begin{equation}\label{eq:eul_tour}
        U=x_1e_1x_2\dots x_{2k-1}e_{2k-1}x_{2k}e_{2k}x_1.
    \end{equation}
    with $x_i\in V(F)$ and $e_i\in E(F)$.
    %Note that every element of $V$ occurs exactly twice in $U$, except $x_0$.
    From a Eulerian tour $U$ as in \cref{eq:eul_tour} we define an induced double-occurrence word as
    \begin{equation}
        m(U)=x_1x_2\dots x_{2k-1}x_{2k}.
    \end{equation}
    %To denote the alternance graph given by the double-occurrence word induced by a Eulerian tour, we will write $\mathcal{A}(U)\equiv\mathcal{A}(m(U))$.
\end{mydef}

We will now define a mapping from an induced double-occurrence word $m(U)$ to a graph $\mathcal{A}(m(U))$, where the edges of $\mathcal{A}(m(U))$ are exactly the pairs of vertices in $m(U)$ which alternate.
Formally we have the following definition.

\begin{mydef}[Alternance graph]\label{def:alt_graph}
    Let $m(U)$ be the induced double-occurrence word of some Eulerian tour $U$ on some $4$-regular multi-graph $F$.
    Let now $\mathcal{A}(m(U))$ be a graph with vertices $V(F)$ and the edges $E$, such that for all $(u,v)\in V(F)\otimes V(F)$, $(u,v)\in E$ if and only if $m(U)$ is of the form
    \begin{equation}
        \dots u \dots v \dots u \dots v \dots \quad\text{or}\quad \dots v \dots u \dots v \dots u \dots,
    \end{equation}
    i.e. $u$ and $v$ are alternating in $m(U)$.
    We will sometimes also write $\mathcal{A}(U)$ as short for $\mathcal{A}(m(U))$.
\end{mydef}

\Cref{fig:example} shows an example of a 4-regular multi-graph and one of its induced alternance graphs.
%Consider for example the $4$-regular multi-graph in \cref{fig:example_a}.
%This graph has a tour $U_0$ with an induced double-occurrence word
%\begin{equation}
%    m(U) = adcbaebced
%\end{equation}
%The alternating vertex-pairs of $m(U)$ are thus
%\begin{equation}
%    (a,b),\; (a,c),\; (a,d),\; (b,e),\; (c, e)
%\end{equation}
%and their mirrors.
%The alternance graph $\mathcal{A}(m(U_0))$ is therefore the graph in \cref{fig:example_b}.

\begin{figure}[H]
    \centering
    \begin{subfigure}{0.35\textwidth}
        \includegraphics[width=1\textwidth]{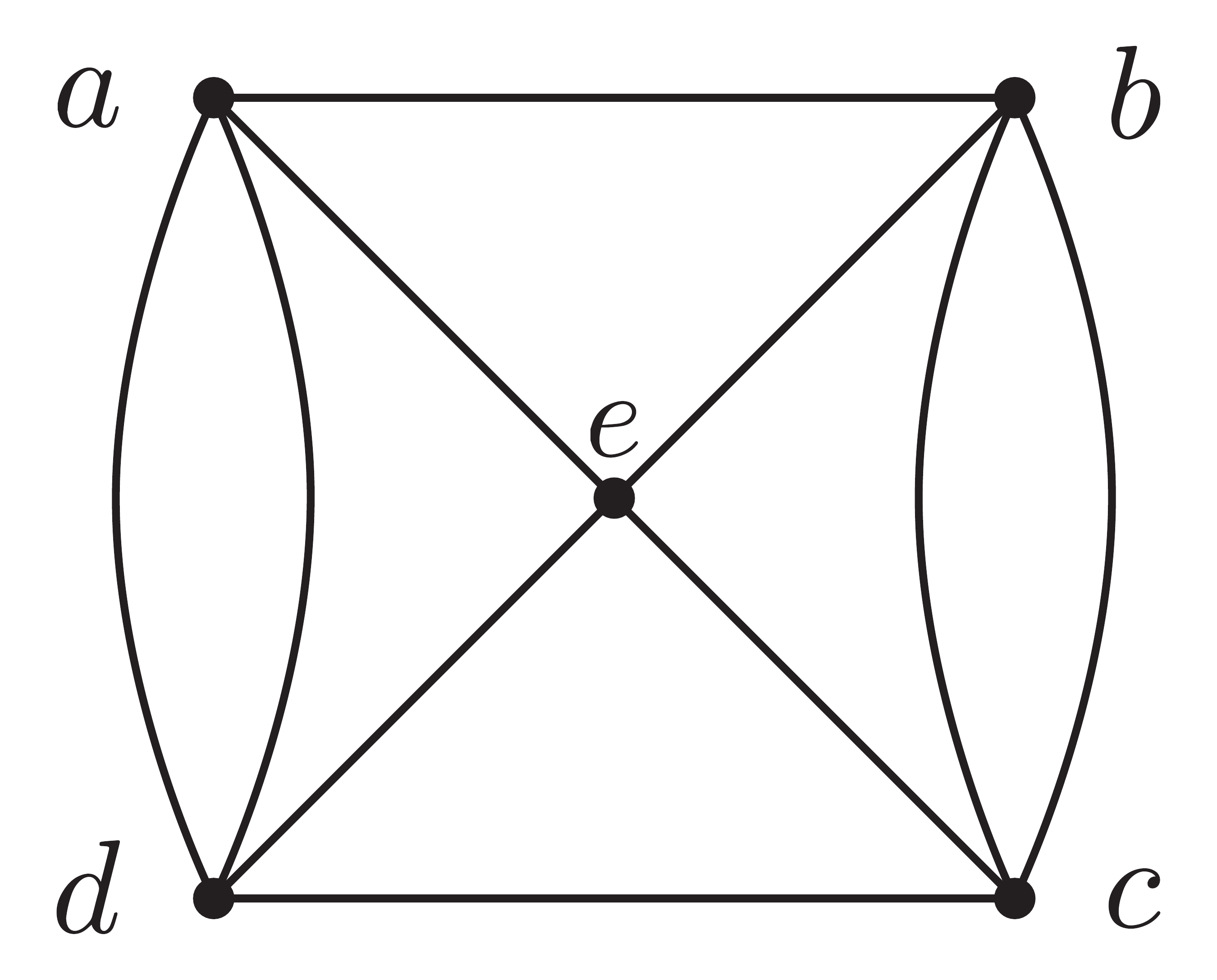}
        \caption{}
        \label{fig:example_a}
    \end{subfigure}
    \hspace{2cm}
    \begin{subfigure}{0.35\textwidth}
        \includegraphics[width=1\textwidth]{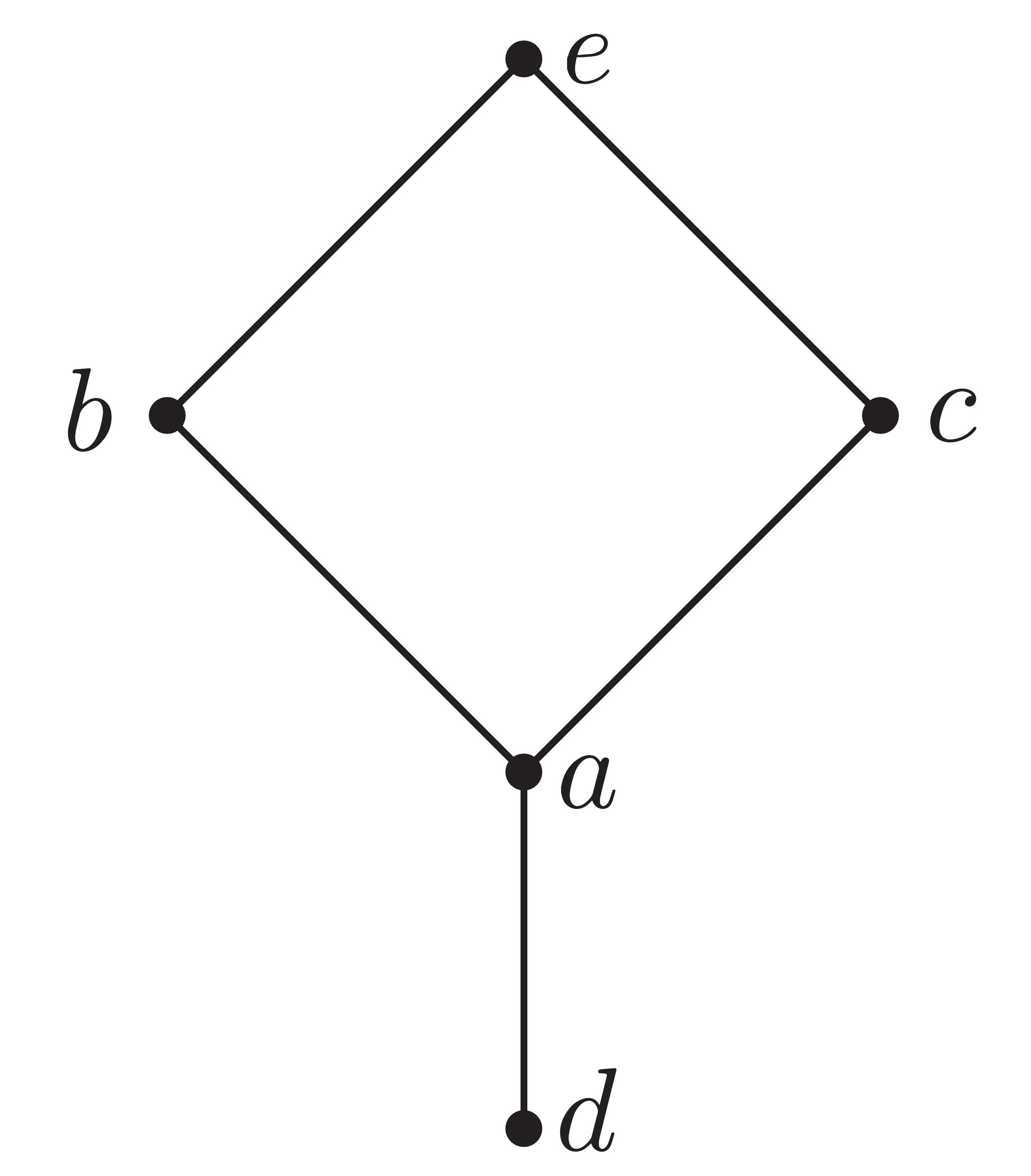}
        \caption{}
        \label{fig:example_b}
    \end{subfigure}
    \caption{
        \Cref{fig:example_a} shows an example of a $4$-regular multi-graph $F$, which has an Eulerian tour $U_0$ with an induced double-occurrence word $m(U) = adcbaebced$.
        The alternating vertex-pairs of $m(U_0)$ are thus $(a,b),\; (a,c),\; (a,d),\; (b,e),\; (c, e)$ and their mirrors.
        The alternance graph $\mathcal{A}(m(U_0))$ is therefore the graph in \cref{fig:example_b}.
    }
    \label{fig:example}
\end{figure}

Circle graphs are exactly the graphs which are the alternance graph described by some Eulerian tour on some $4$-regular multi-graph~\cite{Bouchet1994circle}.
Given a Eulerian tour $U$ of some $4$-regular multi-graph $F$ and a subset $V'$ of $V(F)$ we will sometimes write $m(U)[V']$ to mean the word $m(U)$ with all letters in $V(F)\setminus V'$ deleted.
The notation is intentionally similar to that of taking an induced subgraph since the alternance graph of $m(U)[V']$ is exactly the induced subgraph of the alternance graph of $m(U)$, i.e.
\begin{equation}\label{eq:subword}
    \mathcal{A}\Big(m(U)[V']\Big) = \mathcal{A}\Big(m(U)\Big)[V'].
\end{equation}

Importantly here is that one can answer whether a circle graph has a certain vertex-minor by considering the Eulerian tours of a certain $4$-regular multi-graph.
Formally we have the following theorem, which is proven in~\cite{dahlberg2018long}.

\begin{thm}[\cite{dahlberg2018long}]\label{thm:vm_of_eul}
    Let $F$ be a connected $4$-regular multi-graph and let $G$ be a circle graph such that $G=\mathcal{A}(U)$ for some Eulerian tour $U$ on $F$.
    Then $G'$ is a vertex-minor of $G$ if and only if there exist a Eulerian tour $U'$ on $F$ such that
    \begin{equation}\label{eq:vm_of_eul}
        G'=\mathcal{A}\Big(m(U')\big[V(G')\big]\Big).
    \end{equation}
\end{thm}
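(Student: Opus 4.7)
The plan is to prove both directions by combining Bouchet's theorem on the local-complementation equivalence of circle graphs with the ``commutation'' identity in \cref{eq:subword}, which lets one commute vertex-deletion past the $\mathcal{A}(\cdot)$ map. Recall that by \cref{def:vertex-minor} we may without loss of generality perform all local complementations first and all vertex-deletions last, which makes both directions clean.

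For the forward direction, suppose $G' < G$. Then there is a sequence $\bs{v}$ of local complementations with $\tau_{\bs{v}}(G)[V(G')] = G'$. By Bouchet's theorem~\cite{Bouchet1988graphicisotropic}, any graph LC-equivalent to a circle graph $\mathcal{A}(U)$ is itself the alternance graph of some Eulerian tour on the same $4$-regular multi-graph $F$; hence there exists an Eulerian tour $U'$ on $F$ such that $\tau_{\bs{v}}(G) = \mathcal{A}(U')$. Applying \cref{eq:subword},
\begin{equation}
    G' = \tau_{\bs{v}}(G)[V(G')] = \mathcal{A}(U')[V(G')] = \mathcal{A}\Big(m(U')[V(G')]\Big),
\end{equation}
which is precisely \cref{eq:vm_of_eul}.

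For the backward direction, assume there is an Eulerian tour $U'$ on $F$ satisfying \cref{eq:vm_of_eul}. Again by Bouchet's theorem applied to the two tours $U$ and $U'$ on the same $F$, the circle graphs $\mathcal{A}(U) = G$ and $\mathcal{A}(U')$ are LC-equivalent, so there is a sequence $\bs{v}$ of local complementations with $\tau_{\bs{v}}(G) = \mathcal{A}(U')$. Using \cref{eq:subword} once more,
\begin{equation}
    \tau_{\bs{v}}(G)[V(G')] = \mathcal{A}(U')[V(G')] = \mathcal{A}\Big(m(U')[V(G')]\Big) = G',
\end{equation}
so $G' < G$ by \cref{def:vertex-minor}.

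The main obstacle, and the only genuinely non-elementary ingredient, is Bouchet's LC-equivalence theorem for circle graphs; everything else is bookkeeping around \cref{eq:subword} and the fact that deletions can be deferred in any vertex-minor sequence. I would make sure to state Bouchet's theorem explicitly as a cited lemma before the proof so that both implications read symmetrically, and double-check the edge case where $V(G') = \emptyset$ or $F$ has only one vertex (which is immediate since the empty graph is trivially realized as $\mathcal{A}(m(U')[\emptyset])$).
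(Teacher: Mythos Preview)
The paper does not actually prove \cref{thm:vm_of_eul}: it is imported wholesale from \cite{dahlberg2018long}, as the text says just before the statement (``which is proven in~\cite{dahlberg2018long}''). So there is no in-paper argument to compare your proposal against.

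That said, your proof is correct and is precisely the natural argument one would expect in \cite{dahlberg2018long}: Bouchet's theorem identifies the LC-orbit of $\mathcal{A}(U)$ with $\{\mathcal{A}(U'):U'\text{ Eulerian on }F\}$, and \cref{eq:subword} lets you commute the restriction to $V(G')$ past $\mathcal{A}$, so both directions reduce to the definition of vertex-minor with deletions deferred. The one place worth tightening is your appeal to Bouchet: the biconditional as quoted in the paper (``LC-equivalent iff described by tours on the same $4$-regular multi-graph'') does not by itself say that a local complement of a circle graph is again a circle graph, nor that it is realized on the \emph{same} $F$ you started with. What you really use is the $\kappa$-transformation form of Bouchet's result, namely $\tau_v(\mathcal{A}(U))=\mathcal{A}(\kappa_v(U))$ for an explicit Eulerian tour $\kappa_v(U)$ on $F$; iterating this gives exactly the orbit statement you need. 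Your own suggestion to state Bouchet's theorem as a separate cited lemma is the right fix and would make both implications airtight.
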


Using \cref{thm:vm_of_eul}, we can now ask what property a 4-regular multi-graph should have, such that it's induced alternance graphs are YES-instances to \BellVM, given a set of disjoint pairs $P$.
As we show in \cref{lem:bellEDP_equiv}, this question will be directly related to a restricted version of the edge-disjoint path problem, which we define in the next section.

\section{The Edge-disjoint path problem}\label{sec:problems}
%In this note we prove that the \BellVM\ problem is \NP-Complete.
%\BellVM\ is the problem of deciding whether a graph $G$ has a vertex-minor of the form $K_2^{\times k}$ on a set of vertex pairs $\{(v_1,v_1'),\ldots,(v_k,v_k')\}$.
In \cref{sec:bellvmnpc} we show that \BellVM\ is \NP-Complete by reducing the $4$-regular \EDPDT\ (Edge Disjoint Paths with Disjoint Terminals) problem (see below) to \BellVM.
We then show that that $4$-regular \EDPDT\ is \NP-Complete, see \cref{cor:edpnpc}, which therefore implies that the same is true for \BellVM.
This is done by reducing the EDP (Edge Disjoint Path) problem to \EDPDT\ and then \EDPDT\ to $4$-regular \EDPDT.
% Then we reduce general \EDPDT\ to $4$-regular EDPDT and finally we reduce the EDP (Edge Disjoint Path) problem to \EDPDT.
% Furthermore, EDP is a known \NP-Complete problem~\cite{}.
We begin by formally defining all problems just mentioned.
We will denote the set of edges in a path $P=v_0e_1v_1\dots e_lv_l$ as $E(P)=\{e_1,\dots,e_l\}$.
Moreover, given two graphs $G=(V(G), E(G))$ and $D=(V(D), E(D))$ we denote by $G\cup D$ the graph formed by the vertices $V(G\cup D)=V(G)\cup V(D)$ and the edge multi-set $E(G\cup D) = E(G)\cup E(D)$.
A path on a graph is a walk without repeated vertices.
A closed path is called a circuit.

We first define the EDP (Edge Disjoint Path) problem.

\begin{pbm}[EDP]\label{pbm:edp}
Let $G$ and $D$ be graphs such that $V(D)\subseteq V(G)$.
Decide whether there exists a set of edge-disjoint circuits $\mathcal{C}$ on the graph $G\cup D$ such that every edge $e\in E(D)$ is part of exactly one circuit in $\mathcal{C}$.
\end{pbm}

\begin{figure}[H]
    \centering
    \includegraphics[width=0.5\textwidth]{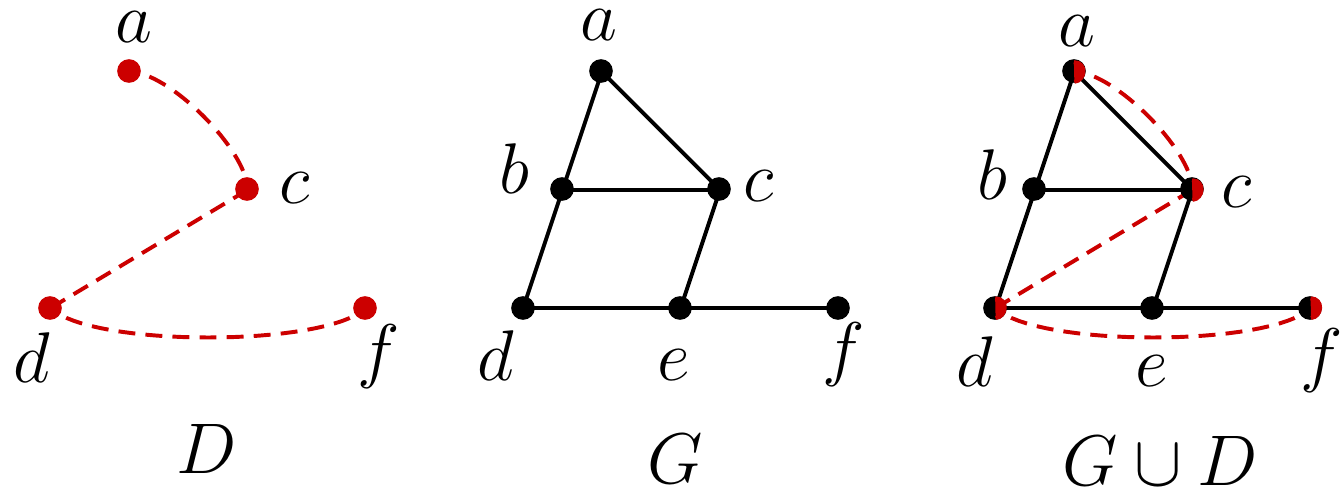}
    \caption{The EDP problem concerns deciding if there exists a set of edge-disjoint circuits in $G\cup D$ such that each edge in $D$ is in part of exactly one of the circuits.
        In this example a satisfying solution is a set consisting of the three circuits $a(a,c)c{\color{red}(c,a)}a$, $d(d,b)b(b,c)c{\color{red}(c,d)}d$ and $d(d,e)e(e,f)f{\color{red}(f,d)}d$.
    }
    \label{fig:EDP}
\end{figure}

% In this picture one can think of the edges of $D$ as 'demand edges' and the edges of $G$ as 'supply edges', see \cref{fig:EDP}.
Let $(G,D)$ be a YES-instance of \EDP\ and let $\mathcal{C}$ be the edge-disjoint circuits of \cref{pbm:edp}.
For each edge $e$ in $E(D)$ denote the circuit in $\mathcal{C}$ which $e$ is part of as $C_e$.
These edge disjoint circuits $C_e$ on $G\cup H$ correspond to edge disjoint paths on the graph $G$ with \emph{terminals}, i.e. beginning and ending points of the path, precisely at the vertices $u,v$ s.t.
$e = (u,v)$ for all $e \in D$.
The EDP problem is known to be \NP-Complete even in the case where $G\cup D$ is a Eulerian graph~\cite{vygen1995np}.
The EDPDT problem is now the EDP problem but with the demand graph $D$ restricted to be a disjoint union of connected graphs on two vertices, i.e.
of the form $K_2^{\times k}$ for some $k$, such that the terminals are distinct.

\begin{pbm}[\EDPDT]
Let $G$ and $D=K^{\times k}_2$ be graphs such that $V(D)\subseteq V(G)$.
Decide \EDP\ with the instance $(G, D)$.
% Decide whether there exists a set of edge-disjoint circuits $C_e$ on the graph $G\cup D$ such that every edge $e\in E(H)$ is part of exactly one such circuit.
\end{pbm}

The 4-regular \EDPDT\ problem is then a further restriction of this problem to the case where $G\cup D$ is $4$-regular (and $D = K^{\times k}_2$).
Formally we have

\begin{pbm}[4-regular \EDPDT]\label{pbm:4_regEDPDT}
Let $G$ and $D=K^{\times k}_2$ be graphs such that $V(D)\subseteq V(G)$ and $G\cup D$ is $4$-regular.
Decide \EDP\ with the instance $(G, D)$.
% Decide whether there exists a set of edge-disjoint circuits $C_e$ on the graph $G\cup D$ such that every edge $e\in E(H)$ is part of exactly one such circuit.
\end{pbm}

Note that the $4$-regularity of $G\cup D$ means that all vertices in $G$ which are not in $D$ must have degree $4$, while all vertices also in $D$ must have degree $3$.
We can equivalently formulate the $4$-regular EDPDT problem as a problem involving only $G$ and a set of terminal pairs on $G$, which will be a more useful definition for some of the proofs.

\begin{pbm}[4-regular EDPDT (equivalent formulation)]\label{pbm:4_regEDPDTequiv}
    Let $G$ be a multi-graph where each vertex has degree either 3 or 4.
    Let $T=\{\{t_1,t'_1\},\dots,\{t_k,t_k'\}\}$ be the set of disjoint terminal pairs, such that $t_i,t_i'\in V(G)$ and $d_G(t_i)=d_G(t_i')=3$ for all $i\in[k]$.
    Furthermore, assume that there are no other vertices of degree $3$ in $G$, i.e. $\bigcup_{i\in[k]}\{t_i,t_i'\}=\{v\in V(G):d_G(v)=3\}$. % and that $\abs{\bigcup_{i\in[k]}\{t_i,t_i'\}}=2k$.
    Decide if there exists $k$ edge-disjoint paths $P_i$ for $i\in[k]$ such that the ends of $P_i$ are $t_i$ and $t_i'$.
\end{pbm}

% The 4-regular \EDPDT\ problem is \NP-Complete.
In the appendix (\cref{thm:EDP2EDPDT}) we show that \EDP\ can be reduced to $4$-regular \EDPDT\ in polynomial time.
% The proof of this is given in the appendix (\cref{thm:EDP2EDPDT}).
A corollary to this theorem is therefore.

\begin{cor}\label{cor:edpnpc}
    4-regular \EDPDT\ is \NP-Complete.
\end{cor}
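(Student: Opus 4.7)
The plan is to observe that \cref{cor:edpnpc} follows almost immediately from the preceding reduction theorem together with the known \NP-completeness of \EDP, so the real work lies in the earlier theorem rather than in this corollary; nevertheless, I would lay out the argument in three short beats.

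First, I would verify membership in \NP. A witness for a YES-instance $(G,T)$ of 4-regular \EDPDT is just the list of $k$ edge-disjoint paths $P_1,\dots,P_k$ with $P_i$ joining $t_i$ to $t_i'$. Checking that the $P_i$ are valid paths in $G$, that they use pairwise disjoint edge sets, and that the endpoints match the prescribed terminal pairs can all be done in time polynomial in $|V(G)|+|E(G)|$, so 4-regular \EDPDT $\in \NP$.

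Second, I would establish \NP-hardness by citing \cref{thm:EDP2EDPDT}, which supplies a polynomial-time many-one reduction from \EDP to 4-regular \EDPDT. Since \EDP is \NP-complete even when $G\cup D$ is Eulerian (Vygen~\cite{vygen1995np}), composing a polynomial reduction from an \NP-hard problem with the identity on \EDP already shows that any instance of \EDP can be mapped in polynomial time to an equivalent instance of 4-regular \EDPDT. Together with the first step, this gives \NP-completeness.

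Third, I would comment on where the difficulty actually sits. The corollary is one line once \cref{thm:EDP2EDPDT} is in hand, but proving that theorem is the main obstacle: one has to take an arbitrary \EDP-instance $(G,D)$ and output an instance where (i) the demand graph is a disjoint union of $K_2$'s with distinct terminals (eliminating parallel demand edges and shared terminals by attaching small gadgets that force the desired routing), and (ii) every non-terminal vertex has degree exactly $4$ and every terminal has degree exactly $3$ (padding low-degree vertices and splitting high-degree vertices through gadgets that preserve edge-disjoint routability). The key subtlety to watch for in that appendix proof is that the gadgets must be \emph{rigid} in the sense that any feasible routing in the new graph induces a feasible routing in the original, so that YES/NO answers are preserved in both directions; once that is done, polynomiality of the construction is routine and \cref{cor:edpnpc} drops out as stated.
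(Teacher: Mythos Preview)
Your proposal is correct and matches the paper's own proof of \cref{cor:edpnpc}: both derive \NP-hardness from \cref{thm:EDP2EDPDT} together with Vygen's result that Eulerian \EDP\ is \NP-Complete, and both handle membership in \NP\ trivially (you verify an explicit path witness, the paper notes that 4-regular \EDPDT\ is a special case of \EDP\ $\in$ \NP). The only cosmetic difference is that the paper explicitly remarks that the construction in \cref{thm:EDP2EDPDT} runs in polynomial time (since the theorem statement itself only asserts existence of the reduced instance), whereas you fold that into your citation of the theorem; your third paragraph about the gadgetry is accurate commentary but not part of the corollary's proof proper.
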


\section{BellVM is NP-Complete}\label{sec:bellvmnpc}

Now we move on to proving the main result of this paper.
The 4-regular \EDPDT\ problem can be reduced to the \BellVM\ problem as shown below in \cref{thm:EDPDT2BellVM}.
%In the next section we show that the 4-regular \EDPDT\ is \NP-Complete by reducing \EDP\ to 4-regular \EDPDT.
%By these two reductions we find that also \BellVM\ is \NP-Hard.
Since we also show that \BellVM\ is in \NP we conclude that \BellVM\ is \NP-Complete.

\begin{cor}\label{cor:bellvmnpc}
    \BellVM\ is \NP-Complete.
\end{cor}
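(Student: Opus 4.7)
The plan is a short combination of the results already established in this section. For \NP-hardness, I would invoke \Cref{thm:EDPDT2BellVM} (which provides a polynomial-time many-one reduction from $4$-regular \EDPDT\ to \BellVM) together with \Cref{cor:edpnpc} (which states that $4$-regular \EDPDT\ is \NP-Complete). Since \NP-hardness is preserved under composition with polynomial-time many-one reductions, \BellVM\ is \NP-hard.

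For membership in \NP, a convenient certificate for a YES-instance $(G,B)$ is a graph $H$ on the same vertex set $V(G)$. The verifier then performs two polynomial-time checks: (i) that $H[V(G_B)] = G_B$, which is an immediate equality test on an induced subgraph, and (ii) that $H$ is LC-equivalent to $G$, i.e.\ that $H=\tau_{\bs{v}}(G)$ for some sequence of local complementations $\bs{v}$. By the equivalent characterization of the vertex-minor relation in \cref{def:vertex-minor} (where vertex deletions are performed last), $G_B < G$ if and only if such an $H$ exists.

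The only potentially delicate step is (ii): bounding the length of an explicit sequence $\bs{v}$ witnessing LC-equivalence would be the main obstacle, but this is sidestepped entirely by storing $H$ directly and appealing to Bouchet's polynomial-time algorithm for deciding LC-equivalence~\cite{Bouchet1991efficient,VandenNest2004efficient}, which verifies (ii) efficiently. Combining \NP-hardness with membership in \NP yields the stated corollary.
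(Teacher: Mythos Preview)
Your argument is correct and matches the paper's proof almost exactly: \NP-hardness via \Cref{thm:EDPDT2BellVM} and \Cref{cor:edpnpc}, and \NP membership because \BellVM\ is a special case of the vertex-minor problem. The only difference is that the paper simply cites~\cite{dahlberg2018long} for the fact that the vertex-minor problem is in \NP, whereas you spell out the underlying certificate (an LC-equivalent graph $H$ verified via Bouchet's algorithm~\cite{Bouchet1991efficient}); this is precisely the argument behind that citation, so the two proofs are essentially identical.
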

\begin{proof}
    \Cref{thm:EDPDT2BellVM} states that there exists a Karp reduction from 4-regular \EDPDT\ to \BellVM.
    This implies that \BellVM\ is \NP-hard, by \cref{cor:edpnpc}.
    Since any instance of \BellVM\ is also an instance of the general vertex-minor problem, which is in \NP~\cite{dahlberg2018long}, also \BellVM\ is in \NP and hence \NP-Complete.
\end{proof}

As a directly corollary we then also have that.

\begin{cor}
    \BellQM\ is \NP-Complete.
\end{cor}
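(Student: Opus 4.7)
The plan is to obtain this corollary as an immediate consequence of \cref{cor:bellvmnpc} via the qubit-minor/vertex-minor correspondence of \cref{thm:QMVM}. The key observation is that the target graph $G_B$ of a \BellVM\ instance is a disjoint union of $K_2$ graphs, so every vertex of $G_B$ has degree exactly one. In particular, $G_B$ has no degree-zero vertices, which is precisely the hypothesis needed to invoke \cref{thm:QMVM}. Hence $\ket{G_B}$ is a qubit-minor of $\ket{G}$ if and only if $G_B$ is a vertex-minor of $G$, which means $(\ket{G},B)$ is a YES-instance of \BellQM\ exactly when $(G,B)$ is a YES-instance of \BellVM.

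First, I would establish \NP-hardness by exhibiting a trivial Karp reduction from \BellVM\ to \BellQM: given an input $(G, B)$ for \BellVM, output the pair $(\ket{G}, B)$ for \BellQM. The description of $\ket{G}$ is simply the graph $G$ itself (graph states are classically specified by their underlying graph), so the reduction is computable in linear time, and the equivalence from \cref{thm:QMVM} ensures that YES-instances map to YES-instances. Combined with \cref{cor:bellvmnpc}, this shows \BellQM\ is \NP-hard.

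Second, I would verify membership in \NP. A certificate is an adaptive sequence of single-qubit Cliffords, single-qubit Pauli measurements, and classical communication steps transforming $\ket{G}$ to $\ket{G_B}\otimes\ket{\text{junk}}$. By \cref{thm:QMVM} and \cref{def:vertex-minor} this is equivalent to exhibiting a sequence of local complementations and vertex-deletions on $G$ yielding $G_B$; such a sequence, if it exists, can be taken to have polynomial length (as argued in the reference for the general qubit-minor problem being in \NP, cited in the proof of \cref{cor:bellvmnpc}), and verification amounts to applying \cref{def:LC} a polynomial number of times and comparing the induced subgraph on $V(G_B)$ to $G_B$. Alternatively, one may simply note that \BellQM\ is a special case of the general qubit-minor problem, which is already known to be in \NP.

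There is no real obstacle here: the content of the corollary has already been absorbed into \cref{thm:QMVM} and \cref{cor:bellvmnpc}. The only thing to check is the applicability of \cref{thm:QMVM}, which as noted holds because $G_B$ has minimum degree one. The entire proof therefore reduces to a one-line appeal to these two results.
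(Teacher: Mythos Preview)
Your proposal is correct and follows exactly the same approach as the paper, which simply states that the corollary follows directly from \cref{cor:bellvmnpc} and \cref{thm:QMVM}. You have merely made explicit the degree-one observation justifying the applicability of \cref{thm:QMVM} and spelled out the \NP-membership argument, both of which the paper leaves implicit.
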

\begin{proof}
	Follows directly from \cref{cor:bellvmnpc} and \cref{thm:QMVM}.
\end{proof}

The remaining part of this section will be used to prove \cref{thm:EDPDT2BellVM}.
The main part of the proof consists of proving \cref{lem:bellEDP_equiv} which provides an explicit reduction from $4$-regular \EDPDT\ to \BellVM.
This is done by constructing a $4$-regular multi-graph $H_{(G,T)}$ from an instance $(G,T)$ of $4$-regular \EDPDT, together with the graph $G_B=K_2^{\times\abs{T}}$ on a certain subset $B$ of the vertices of $H_{(G,T)}$.
The construction is such that $G_B$ is a vertex-minor of any alternance graph $\mathcal{A}(U)$ induced by an Eulerian tour $U$ on $H_{(G,T)}$ if and only if $(G,T)$ is a YES-instance of $4$-regular \EDPDT.
In other words, $(\mathcal{A}(U),B)$ is a YES-instance of \BellVM\ if and only if $(G,T)$ is a YES-instance of $4$-regular \EDPDT.
What is left to show is that reduction can be done in polynomial time, which is done in the proof of \cref{thm:EDPDT2BellVM} below.

\begin{thm}\label{thm:EDPDT2BellVM}
    The 4-regular \EDPDT\ problem is polynomially reducible to \BellVM.
\end{thm}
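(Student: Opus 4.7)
The plan is to exhibit a Karp reduction. Given an instance $(G, T)$ of $4$-regular \EDPDT\ (using the equivalent formulation in \cref{pbm:4_regEDPDTequiv}), I would form the $4$-regular multi-graph $H_{(G,T)}$ by adding one edge between $t_i$ and $t_i'$ for every pair $\{t_i, t_i'\} \in T$. Every terminal then has degree $4$ and every non-terminal already did, so $H_{(G,T)}$ is $4$-regular and, assuming $G$ is connected, Eulerian. I would then compute any Eulerian tour $U$ on $H_{(G,T)}$ via Hierholzer's algorithm, form the induced double-occurrence word $m(U)$, and from it the alternance graph $G' = \mathcal{A}(U)$. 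The reduction outputs the \BellVM\ instance $(G', B)$ with $B = \bigcup_i \{t_i, t_i'\}$.

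The polynomial-time bound is straightforward. Building $H_{(G,T)}$ adds at most $\abs{T}$ edges; Hierholzer's algorithm runs in time linear in $\abs{E(H_{(G,T)})}$; and checking whether each pair of vertices alternates in $m(U)$, in order to build $\mathcal{A}(U)$, takes time $\mathcal{O}(\abs{V(H_{(G,T)})}^2)$. All of these are polynomial in the input size of $(G,T)$.

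Correctness is the content of \cref{lem:bellEDP_equiv}: $G_B$ is a vertex-minor of $\mathcal{A}(U)$ for one (equivalently, every) Eulerian tour $U$ on $H_{(G,T)}$ iff $(G, T)$ is a YES-instance of $4$-regular \EDPDT. The bridge is \cref{thm:vm_of_eul}, which reduces $G_B < \mathcal{A}(U)$ to the existence of some Eulerian tour $U'$ on $H_{(G,T)}$ with $\mathcal{A}(m(U')[B]) = G_B$. Cutting such a $U'$ at its $\abs{T}$ added $D$-edges decomposes the remaining tour into $\abs{T}$ edge-disjoint walks in $G$, each joining two terminals; the requirement that $\mathcal{A}(m(U')[B])$ be exactly $K_2^{\times \abs{T}}$ with the matching prescribed by $T$ forces these walks to pair $t_i$ with $t_i'$ for every $i$, and each such walk contains a simple path between its endpoints, yielding the desired edge-disjoint paths in $G$. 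Conversely, any EDPDT solution can be stitched into such a $U'$ by adjoining an Eulerian decomposition of the leftover edges of $G$ and using the $D$-edges as connectors.

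The expected main obstacle is the alternance bookkeeping inside \cref{lem:bellEDP_equiv}. Because each terminal appears exactly twice in $m(U')$ and the added edge $e_i$ forces one occurrence of $t_i$ and of $t_i'$ to be consecutive in the cyclic word, the placement of the remaining two occurrences of any pair of terminals must be analysed case-by-case to translate the combinatorial condition ``the pair $\{t_i, t_i'\}$ alternates in $m(U')$ and no other terminal pair does'' into the path-theoretic condition ``the arcs produced by cutting the tour at the $D$-edges pair the terminals according to $T$.'' This translation, rather than the reduction itself, is where the technical content is concentrated; the theorem follows from it together with the elementary polynomial-time accounting above.
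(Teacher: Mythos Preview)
Your reduction differs from the paper's in a way that breaks correctness. You build $H_{(G,T)}$ by adding a single edge $(t_i,t_i')$ per pair and request Bell pairs on the terminals themselves. The paper instead adjoins \emph{fresh} vertices $p_i,p_i'$, with one edge $(t_i,p_i)$, one edge $(t_i',p_i')$, a \emph{doubled} edge $(p_i,p_i')$, and a cycle $p_1'p_2\,p_2'p_3\cdots p_k'p_1$; the Bell pairs are then requested on the new vertices $p_i,p_i'$, not on the $t_i$. That extra scaffolding is exactly what makes \cref{lem:bellEDP_equiv} go through (via the auxiliary \cref{lem:notBells}); with your bare construction the analogue of the lemma is false.

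Here is a concrete YES-instance that your map sends to a NO-instance. Take $k=1$ and $G$ on $\{t_1,a,t_1',b,c\}$ with three parallel edges $t_1a$, one edge $at_1'$, edges $t_1'b$ and $t_1'c$, and three parallel edges $bc$. All degrees are $3$ or $4$ as required, and $(G,T)$ is a YES-instance via the path $t_1\text{--}a\text{--}t_1'$. In your $H_{(G,T)}$ the block on $\{b,c\}$ hangs off $t_1'$ only, so every Eulerian tour must enter and leave that block through $t_1'$; the two occurrences of $t_1'$ in $m(U)$ are therefore separated solely by $b$'s and $c$'s, and $m(U)[\{t_1,t_1'\}]$ is cyclically $t_1t_1t_1't_1'$ for \emph{every} Eulerian tour. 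Hence $\mathcal{A}(m(U)[B])$ is edgeless for all $U$, and by \cref{thm:vm_of_eul} the single edge $G_B$ is not a vertex-minor of $\mathcal{A}(U)$.

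The failing step in your sketch is the direction ``\EDPDT\ YES $\Rightarrow$ \BellVM\ YES''. Stitching the paths $P_i$ with leftover Euler pieces gives \emph{some} Eulerian tour, but you have no control over where the second occurrence of each terminal lands: each $t_i$ has three $G$-edges, so it appears once more as an internal vertex of some arc, and the local structure of $G$ can pin that occurrence to the wrong side of the other one regardless of the paths you chose. The paper avoids this entirely by moving the Bell request to the fresh $p_i,p_i'$, whose incidences are completely prescribed by the gadget; the alternance analysis then reduces to a fixed auxiliary multigraph handled by \cref{lem:notBells}.
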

\begin{proof}
    From \cref{lem:bellEDP_equiv} below we see that any \emph{yes}(\emph{no})-instance of 4-regular \EDPDT\ can be mapped to a \emph{yes}(\emph{no})-instance of \BellVM.
    What remains to be shown is that this mapping can be performed in polynomial time.
    The reduction consist of the following to three steps:
    \begin{enumerate}
        \item Construct the multi-graph $H_{(G,T)}$ as defined in \cref{lem:bellEDP_equiv}.
        \item Find an Eulerian tour $U$ on $H_{(G,T)}$.
        \item Construct the alternance graph $\mathcal{A}(U)$ induced by $U$.
    \end{enumerate}

    Computing the graph $H_{(G,T)}$ can be done in polynomial time by simply adding the vertices and edges described in \cref{lem:bellEDP_equiv}.
    Note that the number of vertices and vertices in $H_{(G,T)}$ is $\abs{V(G)} + 2\cdot\abs{T}$ and $\abs{E(G)} + 4\cdot\abs{T}$.
    Finding an Eulerian tour $U$ on $H_{(G,T)}$ can be done in polynomial time~\cite{Fleury1883} in the size of $H_{(G,T)}$.
    Furthermore, constructing the alternance graph $\mathcal{A}(U)$ can be done in polynomial time as shown in~\cite{Bouchet1994circle}.
\end{proof}

\begin{lem}\label{lem:bellEDP_equiv}
	Let $(G,T=\{\{t_1,t'_1\},\dots,\{t_k,t_k'\}\})$ be an instance of 4-regular \EDPDT (see \cref{pbm:4_regEDPDTequiv}).
    As illustrated in \cref{fig:H_GTa}, let $H_{(G,T)}$ be the multi-graph constructed from $G$ by adding the distinct vertices $V_B=\bigcup_{i\in[k]}\{p_i,p_i'\}$ and the edges $(t_i,p_i)$, $(t_i',p_i')$, $(p_i,p_i')$ and\footnote{Note that $(p_i,p_i')$ is added twice.} $(p_i,p_i')$ for $i\in[k]$ and the edges $(p_i',p_{i+1})$ for $i\in[k-1]$ and the edge $(p_k',p_1)$.
    Let $G_B$ be the graph with vertices $V_B$ and edges $\bigcup_{i\in[k]}\{(p_i,p_i')\}$.
    Let $\mathcal{A}(U)$ be a circle graph described by the Eulerian tour $U$ on $H_{(G,T)}$.
    Then $G_B$ is a vertex-minor of $\mathcal{A}(U)$ if and only if $(G,T)$ is a \emph{yes}-instance of 4-regular \EDPDT.
\end{lem}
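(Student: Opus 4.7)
The plan is to translate the vertex-minor condition into an existence condition on Eulerian tours of $H_{(G,T)}$ via \cref{thm:vm_of_eul}: since $V(G_B)=V_B\subseteq V(H_{(G,T)})$, that theorem gives $G_B<\mathcal{A}(U)$ if and only if some Eulerian tour $U'$ on $H_{(G,T)}$ satisfies $\mathcal{A}(m(U')[V_B])=G_B$. It thus suffices to establish that such a $U'$ exists \emph{iff} $(G,T)$ is a YES-instance of 4-regular \EDPDT.

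For the ``if'' direction, given edge-disjoint paths $P_1,\dots,P_k$ with $P_i$ from $t_i$ to $t_i'$, I would construct $U'$ explicitly by concatenating, for each $i=1,\dots,k$ in turn, the closed walk $p_i\to t_i\to P_i\to t_i'\to p_i'\to p_i\to p_i'\to p_{i+1}$ (the two middle transitions use the two parallel edges between $p_i$ and $p_i'$, and indices are taken cyclically so the last step is $(p_k',p_1)$). This base walk consumes every gadget edge and every edge of the $P_i$'s exactly once. The remaining edges $R:=E(G)\setminus\bigcup_i E(P_i)$ have even degree at every vertex---a parity check using $d_G(t)=3$ at terminals and $d_G(v)=4$ elsewhere, together with the fact that each path contributes $0$ or $2$ edges at any internal vertex and $1$ at a terminal---so $R$ is Eulerian and its closed walks splice into the base walk at shared vertices to give a single Eulerian tour $U'$ on $H_{(G,T)}$. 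Since splicing in $G$ introduces no $V_B$-symbols, $m(U')[V_B]=p_1p_1'p_1p_1'\,p_2p_2'p_2p_2'\cdots p_kp_k'p_kp_k'$, and in this cyclic word each $(p_i,p_i')$ alternates while pairs with distinct indices occupy disjoint contiguous $4$-blocks and do not interleave, giving $\mathcal{A}(m(U')[V_B])=G_B$.

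For the converse, suppose $U'$ is an Eulerian tour with $\mathcal{A}(m(U')[V_B])=G_B$. At each terminal $t$ (degree $3$ in $G$ and $4$ in $H_{(G,T)}$), the two transitions of $U'$ must pair the single gadget edge with one $G$-edge and the remaining two $G$-edges together; hence restricting $U'$ to $E(G)$ yields exactly $k$ edge-disjoint walks between terminals, which induce a perfect matching $\pi$ on the $2k$ terminals. Let $M'$ be the multi-graph on $V_B$ whose edges are the $4k$ adjacencies of the cyclic word $m(U')[V_B]$: these consist of (i) two copies of $\{p_i,p_i'\}$ from each pair of parallel edges, (ii) one copy of $\{p_i',p_{i+1}\}$ from each cyclic edge, and (iii) one edge per $G$-walk, joining the $V_B$-vertices associated with its two endpoints. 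Then $m(U')[V_B]$ is an Eulerian tour on $M'$ whose alternance graph is $G_B$. Since local complementation at any vertex of $G_B$ complements only a single-vertex neighbourhood and leaves $G_B$ unchanged, the LC-equivalence class of $G_B$ is $\{G_B\}$, and by Bouchet's correspondence between LC-equivalence classes of circle graphs and isomorphism classes of $4$-regular multi-graphs $M'$ is determined up to isomorphism. Comparison with the multi-graph produced in the ``if'' direction---a triple edge on every $\{p_i,p_i'\}$ and a single edge on every $\{p_i,p_{i-1}'\}$---forces each walk in (iii) to produce an edge $\{p_i,p_i'\}$, so each walk runs from $t_i$ to $t_i'$. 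Shortening each walk to a simple path by iteratively removing internal cycles then yields the required EDPDT solution. The main obstacle I anticipate is precisely this rigidity step; if the Bouchet-style correspondence is not readily available within the paper's framework, a fallback is a direct local case analysis of the three possible transition-pairings at each of $p_i$ and $p_i'$, ruling out every combination that does not route $(t_i,p_i)$ and $(t_i',p_i')$ together through the gadget by showing that any other combination either disconnects the Eulerian tour (when the parallel edges form an isolated $2$-cycle) or creates an unwanted alternance between $p_i$ or $p_i'$ and a vertex of a neighbouring gadget.
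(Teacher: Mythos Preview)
Your forward direction and the paper's are essentially identical: both build an explicit closed walk through the gadget along the $k$ edge-disjoint paths, extend it to an Eulerian tour via Hierholzer, and read off $m(U')[V_B]=p_1p_1'p_1p_1'\cdots p_kp_k'p_kp_k'$.

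For the converse you and the paper also agree on the first reduction: the restricted word $m(U')[V_B]$ is an Eulerian tour on a $4$-regular multi-graph $M'$ on $V_B$ (the paper calls it $H_{(U,B)}$) consisting of the two parallel edges $(p_i,p_i')$, the cycle edges $(p_i',p_{i+1})$, and $k$ further ``free'' edges coming from the sub-trails through $G$; the remaining task is to show these free edges are forced to be $\{(p_i,p_i')\}_i$.

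Here your primary argument has a gap. The statement you invoke---that the LC-class of a circle graph determines its underlying $4$-regular multi-graph---does not hold for the disconnected graph $G_B$ at the level of labelled multi-graphs: already for $k=2$ the two connected multi-graphs with edge multisets
\[
\{3{\times}(p_1,p_1'),\,3{\times}(p_2,p_2'),\,(p_1',p_2),\,(p_2',p_1)\}\quad\text{and}\quad\{3{\times}(p_1,p_1'),\,3{\times}(p_2,p_2'),\,(p_1',p_2'),\,(p_1,p_2)\}
\]
both admit Eulerian tours whose alternance graph is $G_B$. Even granting uniqueness up to unlabelled isomorphism, you would still have to argue that within the constrained family (with the double edges and cycle edges fixed) no wrong choice of free edges produces a multi-graph isomorphic to the intended one, and that step is absent. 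The paper bypasses this with a dedicated lemma (\cref{lem:notBells}): assuming some free edge is $\{p_i,\tilde p_i'\}\neq\{p_i,p_i'\}$, it explicitly constructs an Eulerian tour on $M'$ in which $p_i$ and $\tilde p_i'$ alternate, and then uses the LC-invariance of $G_B$ (which you correctly note) to conclude that no tour on $M'$ realises $G_B$. That explicit construction is close in spirit to your fallback transition-pairing analysis; it is the fallback, not the Bouchet-type rigidity, that actually carries the argument.
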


%\begin{figure}[H]
%    \centering
%    \includegraphics[width=0.6\textwidth]{figs/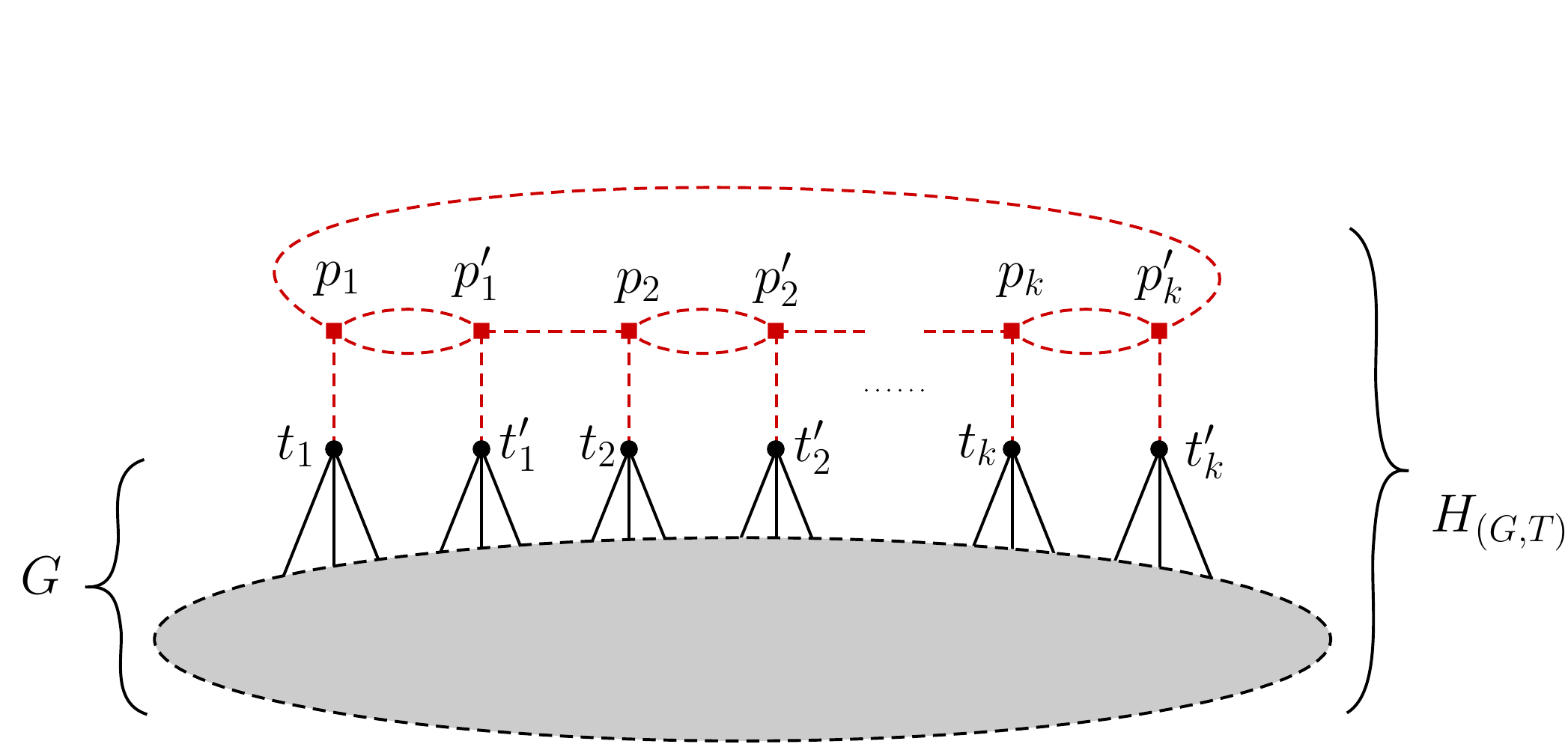}
%    \caption{Graph construction to reduce 4-regular \EDPDT\ to \BellVM.
%$H_{(G,T)}$ is constructed from $G$ (with black round vertices and black solid edges) by adding the red square vertices and red dashed edges.}
%    \label{fig:H_GT}
%\end{figure}
%
%\begin{figure}[H]
%    \centering
%    \includegraphics[width=0.6\textwidth]{figs/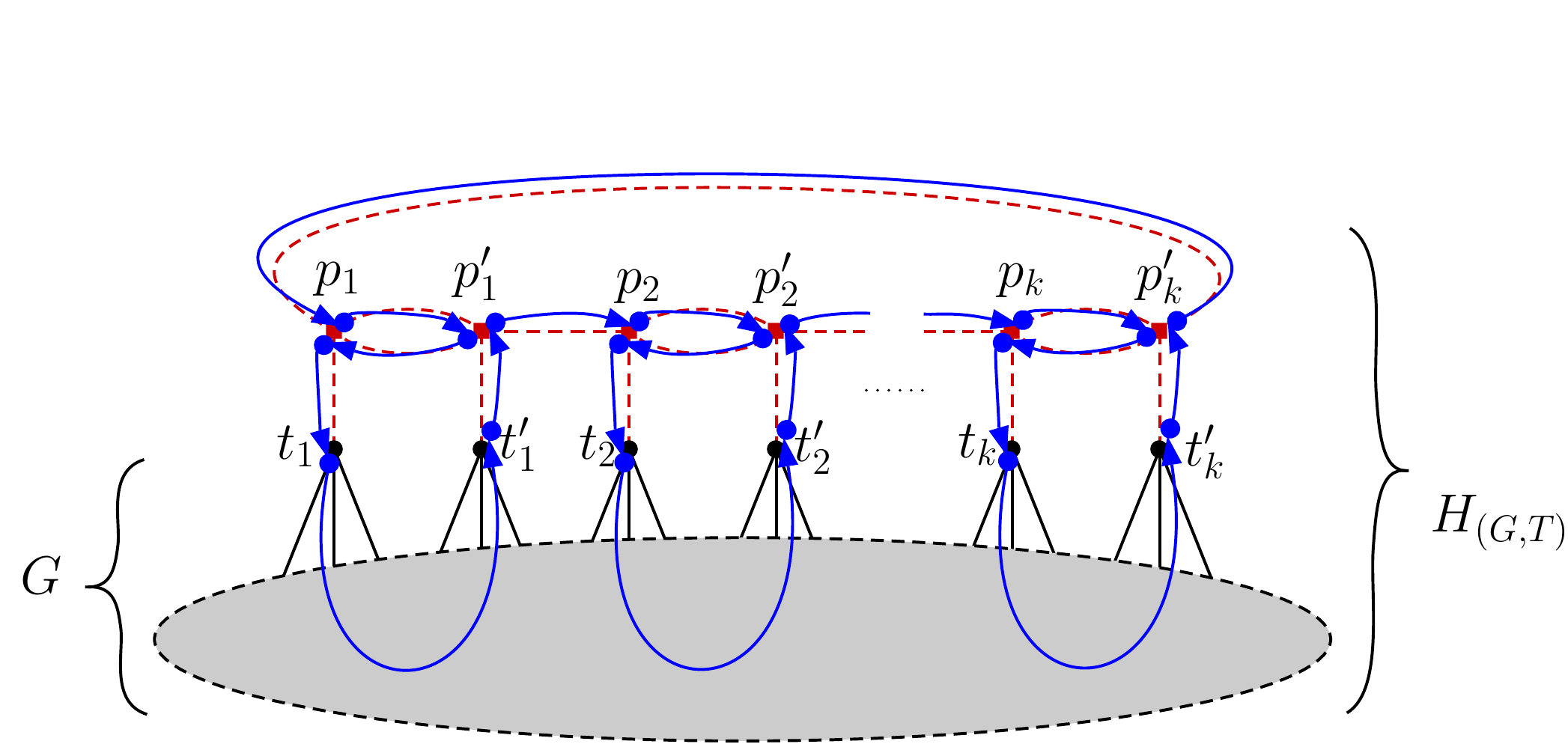}
%    \caption{Graph construction to reduce 4-regular \EDPDT\ to \BellVM.
%$H_{(G,T})$ is constructed from $G$ (with black round vertices and black solid edges) by adding the red square vertices and red dashed edges.\axel{Decide which figure to use}}
%    \label{fig:H_GT}
%\end{figure}

\begin{figure}[H]


    \centering
    \begin{subfigure}{0.9\textwidth}
        \includegraphics[width=1\textwidth]{H_GT.pdf}
        %\caption{Graph construction to reduce 4-regular \EDPDT\ to \BellVM.
        %$H_{%(G,T)}$ is constructed from $G$ (with black round vertices and black solid edges) by adding the red square vertices and red dashed edges.}
        \caption{}
        \label{fig:H_GTa}
    \end{subfigure}
    \\
    \begin{subfigure}{0.9\textwidth}
        \centering
        \includegraphics[width=1\textwidth]{H_GT_circuit.pdf}
        %\caption{Graph construction to reduce 4-regular \EDPDT\ to \BellVM.
        %$H_{(G,T})$ is constructed from $G$ (with black round vertices and black solid edges) by adding the red square vertices and red dashed edges.\axel{Decide which figure to use}}
        \caption{}
        \label{fig:H_GTb}
    \end{subfigure}
    \caption{\Cref{fig:H_GTa} shows the graph construction used to reduce 4-regular \EDPDT\ to \BellVM.
	The graph $G$ is a graph with only vertices of degree 4 except the vertices $\{t_1,t'_1,t_2,t'_2,\dots,t_k,t'_k\}$ which have degree 3.
	The vertices of $G$ not in $\{t_1,t'_1,t_2,t'_2,\dots,t_k,t'_k\}$ are visualized as a grey solid area in order to show that we put no further restrictions on $G$.
        $H_{(G,T)}$ is constructed from $G$ by adding the red square vertices and red dashed edges.
        In \cref{fig:H_GTb} the tour $U_0$ described by the word in \cref{eq:tour} is illustrated using solid blue arrows.
    }
    \label{fig:H_GT}
\end{figure}

\begin{proof}
    Let's first assume that $(G,T)$ is a YES-instance of 4-regular \EDPDT.
    This implies that there exist $k$ edge-disjoint paths $P_i$ for $i\in[k]$ such that the ends of $P_i$ are $t_i$ and $t_i'$.
    These paths also exist in $H_{(G,T)}$, since $G$ is a subgraph of $H_{(G,T)}$.
    Consider then the tour $U'$, see \cref{fig:H_GTb}, described by the word
    \begin{equation}\label{eq:tour}
        m(U')=p_1p_1'p_1m(P_1)p_1'p_2\dots p_{k-1}'p_kp_k'p_km(P_k)p_k'.
        % U'=p_1(p_1,p_1')p_1'(p_1',p_1)p_1(p_1,t_1)P_1(t_1',p_1')p_1'(p_1',p_2)p_2\dots p_{k-1}'(p_{k-1}',p_k)p_k(p_k,p_k')p_k'(p_k',p_k)p_k(p_k,t_k)P_k(t_k',p_l')p_k'(p_k',p_1)p_1
    \end{equation}
    Note $U'$ is not necessarily an Eulerian tour but can be extended to one using the efficient Hierholzer's algorithm~\cite{even_2011}.
    Denote the Eulerian tour obtained from $U'$ by $U$.
    Consider now the induced subgraph $\mathcal{A}(U)[V_B]$.
    Using \cref{eq:subword} we know that this induced subgraph is the same as the alternance graph given by the induced word 
    \begin{equation}
        m(U)[V_B]=p_1p_1'p_1p_1'p_2\dots p_{k-1}'p_kp_k'p_kp_k'.
    \end{equation}
    We then have that $\mathcal{A}(m(U)[V_B])$, and therefore $\mathcal{A}(U)[V_B]$, is the graph $G_B$.
    To see this, note that the only alternances of $m(U)[V_B]$ are $p_ip'_ip_ip'_i$ for $i\in[k]$ and the edges of $\mathcal{A}(m(U)[V_B])$ are therefore $(p_i, p'_i)$ for $i\in[k]$.
    Since $\mathcal{A}(U)[V_B]$ is an induced subgraph of $\mathcal{A}(U)$ it is by definition also a vertex-minor of $\mathcal{A}(U)$.
    As we saw above, $G_B$ is equal to $\mathcal{A}(U)[V_B]$ and is therefore also a vertex-minor of $\mathcal{A}(U)$.

    Let's now instead assume that $(G,T)$ is a NO-instance of 4-regular \EDPDT.
    We will show that $G_B$ is not a vertex-minor of $\mathcal{A}(U)$.
    Let $U$ be an Eulerian tour on $H_{(G,T)}$, which exist since $H_{(G,T)}$ is 4-regular.
    Consider the sub-trails of $U$ for which both ends are in $V_B$ and that use no edges between vertices in $V_B$.
    Note that there are $k$ such sub-trails since each vertex in $V_B$ only has one edge which does not go to another vertex in $V_B$.
    Let's denote these sub-trails as $\tilde{P}_i$ for $i\in[k]$ and their ends as $\{p_i,\tilde{p}_i'\}$.
    From the assumption that $(G,T)$ is a \emph{no}-instance of 4-regular \EDPDT, we know that $\{p_i,\tilde{p}_i'\}\neq\{p_i,p_i'\}$ for at least one $i\in[k]$.
    Consider now the graph $\mathcal{A}(U)[V_B]$.
    We will now show that $\mathcal{A}(U)[V_B]$ cannot be $G_P$.
    Since $U$ was an arbitrary Eulerian tour on $H_{(G,T)}$, we then know that $G_B$ is not a vertex-minor of $U$, by \cref{thm:vm_of_eul}.
    % The theorem then follows, since $U$ was an arbitrary Eulerian tour on $U$ and the fact that $G_P<\mathcal{A}(U)$ if and only if there is an Eulerian tour $U'$ on $H_{(G,T)}$ such that $G_P=\mathcal{A}(U')[P]$, as implied by \cref{thm:vm_of_eul}.

    %We will now show that $\mathcal{A}(U)[P]$ is not $G_P$.
    %As shown in~\cite{graph_states}, the induced subgraph $\mathcal{A}(U)[P]$ is equal to the alternance graph given by the word $m(U)[P]$.
    To show that $\mathcal{A}(U)[V_B]$ is not $G_B$, consider the induced subgraph $H_{(G,T)}[V_B]$.
    Note that all vertices in $H_{(G,T)}[V_B]$ have degree 3, see \cref{fig:H_GTa}.
    Let $H_{(U,B)}$ be the graph obtained from $H_{(G,T)}[V_B]$ by adding the edges $(p_i,\tilde{p}_i')$.
    Note now that $m(U)[V_B]$ also describes an Eulerian tour on $H_{(U,B)}$.
    Thus, $\mathcal{A}(U)[V_B]$ is the alternance graph $\mathcal{A}(U')$ for some Eulerian tour $U'$ on $H_{(U,B)}$.
    % By \cref{lem:subUtoH} we know that $\mathcal{A}(U)[P]=\mathcal{A}(U')$ for some Eulerian tour $U'$ on $H_{(U,P)}$.
    Therefore, by using \cref{lem:notBells} we know that $\mathcal{A}(U)[V_B]$ is not $G_B$.
\end{proof}

% \begin{lem}\label{lem:subUtoH}
%     Let $H$ be a 4-regular multi-graph and $P$ be a subset of the vertices $V(H)$.
%     Let $U$ be an Eulerian tour on $H$.
%     Consider the subtrails of $U$ for which both ends are in $P$ and that use no edges between vertices in $P$.
%     Let's denote these subtrails as $\tilde{P}_l$ and their ends as $\{\tilde{p}_l,\tilde{p}_l'\}$.
%     Let $H_{(U,P)}$ be the graph obtained from the induced subgraph $H[P]$ by adding the edges $(\tilde{p}_l,\tilde{p}_l')$.
%     % Note that $H_{(U,P)}$ is necessarily 4-regular.
%     Then $\mathcal{A}(U)[P]=\mathcal{A}(U')$ for some Eulerian tour $U'$ on $H_{(U,P)}$.
% \end{lem}
% \begin{proof}
%     The induced subgraph $\mathcal{A}(U)[P]$ can also be written as the alternance graph $\mathcal{A}(m(U)[P])$, see eq.
% (43) in \cite{graph_states}.
%     We will now show that $m(U)[P]$ is the double occurrence word induced by some Eulerian tour $U'$ on $H_{(U,P)}$, i.e.
% $m(U)[P]=m(U')$.
%     This is indeed the case if for all subwords of $m(U)[P]$ of length two $pp'$, $(p,p')$ is an edge in $H_{(U,P)}$.
%     This is easy to see {\color{red} Should be expanded}.
% \end{proof}

\begin{figure}[H]
    \centering
    \includegraphics[width=0.3\textwidth]{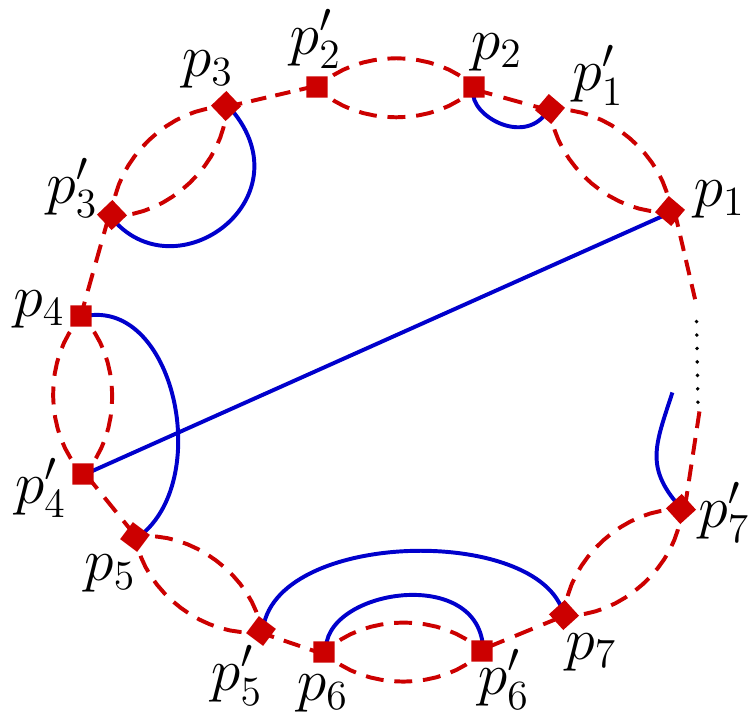}
    \caption{
	    The graph $H_{(B,\tilde{E}_B)}$ used in \cref{lem:notBells}, where $\tilde{E}_B$ are the solid blue edges.
        \Cref{lem:notBells} states that there exist an Eulerian tour $U$ on $H_{(B,\tilde{E}_B)}$ such that $\mathcal{A}(U)=G_B$ if and only if all the solid blue edges $\tilde{E}_B$ are between vertices $p_i$ and $p'_i$ for $i\in[k]$.}
    \label{fig:H_PP}
\end{figure}

\begin{lem}\label{lem:notBells}
    Let $B$ be the set $\{p_1,p_1',\dots,p_k,p_k'\}$.
    Let $C_B$ be the cycle graph on vertices $\bigcup_{i\in[k]}\{p_i,p_i'\}$ and with edges $\{(p_1,p_1'),(p_1',p_2),\dots (p_k,p_k'),(p_k',p_1)\}$.
    Let $\tilde{C}_B$ be the multi-graph obtained from $C_B$ by duplicating the edges $\{(p_i,p_i')\}_{i\in[k]}$.
    Let $\tilde{E}_B=\{(p_i,\tilde{p}_i')\}_{i\in[k]}$ be edges such that if they are added to $\tilde{C}_B$, the graph obtained is 4-regular.
    Denote the graph obtained from $\tilde{C}_B$ by adding the edges $\tilde{E}_B$ by $H_{(B,\tilde{E}_B)}$.
    Let $G_B$ be the graph $(B,\bigcup_{i\in[k]}\{(p_i,p_i')\}$.
    Then for any Eulerian tour $U$ on $H_{(B,\tilde{E}_B)}$, the graph $\mathcal{A}(U)$ is equal to $G_B$ if and only if $\{p_i,\tilde{p}_i'\}=\{p_i,p_i'\}$ for all $i\in[k]$.
\end{lem}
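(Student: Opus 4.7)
My plan is to use a structural decomposition of double-occurrence words whose alternance graph is $G_B$ and then derive each direction of the lemma, one by a direct traversal argument and the other by a short edge count.

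\textbf{Key structural fact.} I would first show that any cyclic double-occurrence word $m$ over the alphabet $B$ with $\mathcal{A}(m)=G_B$ decomposes cyclically into $k$ contiguous length-four blocks of the form $p_{\sigma(j)}p_{\sigma(j)}'p_{\sigma(j)}p_{\sigma(j)}'$ (up to swapping the two letters) for some permutation $\sigma$. The cleanest proof uses the chord-diagram interpretation of $m$: $\mathcal{A}(m)$ is the intersection graph of the chords joining the two occurrences of each letter on a circle with $|m|$ points. Since $\mathcal{A}(m)=G_B$ is a perfect matching on $B$, the chord for each $p_i$ crosses only that of $p_i'$, so the $k$ crossing chord-pairs occupy pairwise disjoint arcs of the circle, which immediately yields the desired block structure.

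\textbf{Forward direction ($\tilde{p}_i'=p_i'$ for all $i$ $\Rightarrow$ $\mathcal{A}(U)=G_B$).} When the matching is correct, each pair $\{p_i,p_i'\}$ is connected to the rest of $H_{(B,\tilde{E}_B)}$ by exactly two edges (the two cycle edges of $C_B$ not of the form $(p_i,p_i')$) and internally by exactly three parallel edges between $p_i$ and $p_i'$. Any Eulerian tour must therefore enter $\{p_i,p_i'\}$ through one external edge, traverse all three parallel edges consecutively (which forces the vertex sub-sequence $p_ip_i'p_ip_i'$ or its reverse), and exit through the other external edge. Hence $m(U)$ is cyclically a concatenation of $k$ length-four blocks as in the structural fact, so $\mathcal{A}(U)=G_B$ is immediate.

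\textbf{Reverse direction ($\mathcal{A}(U)=G_B$ $\Rightarrow$ matching correct).} Assume $\mathcal{A}(U)=G_B$. By the structural fact, $m(U)$ decomposes into $k$ length-four blocks, and within each block the three successive adjacencies all correspond to edges between $p_{\sigma(j)}$ and $p_{\sigma(j)}'$. Consequently, the tour uses $3k$ distinct copies of edges of the form $(p_i,p_i')$. Now $\tilde{C}_B$ contributes exactly two such edges per pair (so $2k$ in total), and $\tilde{E}_B$ contributes one additional edge between $p_i$ and $p_i'$ exactly when $\tilde{p}_i'=p_i'$. Hence the total count equals $2k+|\{i:\tilde{p}_i'=p_i'\}|$, which attains $3k$ only when $\tilde{p}_i'=p_i'$ for every $i$. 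The principal obstacle is establishing the structural fact cleanly via the chord-diagram correspondence; once that is in hand, the rest is a direct traversal argument followed by a routine edge count.
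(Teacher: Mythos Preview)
Your forward direction is fine and in fact proves more than the paper does there (the paper exhibits one Eulerian tour, while you show directly that every tour has the block form). The problem is in the reverse direction: your ``Key structural fact'' is false as stated, and the edge count you derive from it can fail.

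Concretely, take $k=3$ and the cyclic double-occurrence word
\[
m \;=\; p_1\,p_1'\,p_1\,p_2\,p_2'\,p_2\,p_2'\,p_1'\,p_3\,p_3'\,p_3\,p_3'.
\]
One checks that $\mathcal{A}(m)=G_B$: each pair $(p_i,p_i')$ alternates, and no cross-pair does (e.g.\ $p_1'$ occurs at positions $2,8$ and $p_2$ at $4,6$, giving the non-alternating pattern $p_1'p_2p_2p_1'$). Yet the four occurrences of $\{p_1,p_1'\}$ sit at positions $1,2,3,8$, which is not a contiguous arc on the $12$-cycle, so $m$ admits no decomposition into three contiguous length-four blocks. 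The chord-diagram picture tells you only that each other crossing pair lies entirely in one of the four gaps of pair~$1$; it does \emph{not} force those gaps to be empty on three sides. Here pair~$2$ lives in the gap between positions $3$ and $8$, while pair~$3$ lives in the gap between $8$ and $1$, so pair~$1$ is genuinely ``split''. Consequently the transition count also breaks: only $8$ of the $12$ transitions in $m$ are of the form $\{p_i,p_i'\}$, not $3k=9$, so your counting argument cannot go through from the structural fact alone.

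The paper sidesteps this entirely. It first observes that $G_B$ is invariant under local complementation, so by \cref{thm:vm_of_eul} the statement ``$\mathcal{A}(U)=G_B$'' is either true for all Eulerian tours on $H_{(B,\tilde{E}_B)}$ or for none. This lets one work with a single, well-chosen tour in each direction. For the reverse direction, when some $\tilde{p}_i'\neq p_i'$ the paper builds a tour that starts with the edge $(p_i,\tilde{p}_i')$, then walks around the cycle $C_B$ back to $p_i$, and extends via Hierholzer; in that tour $p_i$ and $\tilde{p}_i'$ alternate, giving an edge of $\mathcal{A}(U)$ not in $G_B$. Your counting idea could perhaps be rescued by exploiting the specific edge structure of $H_{(B,\tilde{E}_B)}$ (in particular the fixed $k$ cycle edges $(p_i',p_{i+1})$) rather than an abstract fact about all double-occurrence words, but as written the reverse implication has a gap.
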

\begin{proof}
    First note that if $\mathcal{A}(U)=G_B$ for some Eulerian tour $U$ on $H_{(B,\tilde{E}_B)}$ then $\mathcal{A}(U')=G_B$ for any Eulerian tour $U'$ on $H_{(B,\tilde{E}_B)}$.
    This is because the graph $G_B$ is invariant under local complementations.

    Assume now that $\{p_i,\tilde{p}_i'\}=\{p_i,p_i'\}$ for all $i\in[k]$.
    Then the tour $U$ described by the double occurrence word
    \begin{equation}
        m(U)=p_1p_1'p_1p_1'p_2\dots p_{k-1}'p_kp_k'p_kp_k'.
    \end{equation}
    is an Eulerian tour on $H_{(B,\tilde{E}_B)}$.
    Furthermore, $\mathcal{A}(U)=G_B$.

    Assume now on the other hand that $\{p_i,\tilde{p}_i'\}\neq\{p_i,p_i'\}$ for some $i\in[k]$.
    Let $(p_i,\tilde{p}_i')$ be a pair for which $\{p_i,\tilde{p}_i'\}\neq\{p_i,p_i'\}$.
    Consider now the tour
    \begin{equation}
        U'=p_i(p_i,\tilde{p}_i')P_C
    \end{equation}
    on $H_{(B,\tilde{E}_B)}$ where $P_C$ is a path in $C_B$ which ends at $p_i$, i.e.
$E(P_C)\cap \tilde{E}_B=\emptyset$.
    By Hierholzer's algorithm let $U$ be an Eulerian tour on $H_{(B,\tilde{E}_B)}$ obtained by extending $U'$.
    Note now that $(p_i,\tilde{p}_i')$ is an alternance in $m(U)$ and therefore an edge in $\mathcal{A}(U)$.
    But since $(p_i,\tilde{p}_i')$ is not an edge in $G_B$, we know that $\mathcal{A}(U)$ is not equal to $G_B$.
\end{proof}

\section{Conclusion}\label{sec:conclusion}
The problem of transforming graph states to Bell-pairs using local operations is a problem with direct applications to the development of quantum networks or distributed quantum processors.
Solutions to special cases of this problem have been considered in for example~\cite{pant2019routing} and~\cite{vanmeter2018butterfly}.
However, at least to our knowledge, the computational complexity of this problem was previously unknown.
Here we show that deciding whether a given graph state $\ket{G}$ can be transformed into a set of Bell pairs on a given set of vertices, using only single-qubit Clifford operations, single-qubit Pauli measurements and classical communication is \NP-Complete.
In fact, we show that the problem remains \NP-Complete if $G$ is a circle graph.

\section*{Acknowledgements}
AD, JH and SW were supported by an ERC Starting grant, and NWO VIDI grant, and Zwaartekracht QSC.

\bibliographystyle{unsrtnat}
\bibliography{small}

\newpage

\appendix

\section{The 4-regular \EDPDT\ problem is NP-Complete}\label{sec:edpnpc}
From \cite{vygen1995np} we know that the EDP problem is \NP-Complete even when the graph $G\cup D$ is Eulerian.
Here we prove that this problem remains \NP-Complete if we restrict the demand graph to be of the form $D = K_2^{\times k}$ and restrict $G\cup D$ to be $4$-regular. We call this problem 4-regular \EDPDT, see \cref{pbm:4_regEDPDTequiv}.
To do so we will first introduce the notion of a grid graph gadget, an essential tool for reducing the Eulerian EDP problem to the $4$-regular EDP problem.
We make use of these results to show \BellVM\ (and \BellQM) is \NP-Complete in \cref{sec:bellvmnpc}.

\begin{mydef}\label{def:grid_gadget}
    Let $G$ be an Eulerian multi-graph and let $v$ be a vertex of $G$ of degree $2n$ with incident edges labeled $1,\ldots n, 1', \ldots, n'$.
    The grid gadget gadget $GG_v$ associated to $v$, illustrated in \cref{fig:grid_gadget}, is a graph on $n^2+2n$ vertices labeled
\begin{equation}
    V(GG_v) = \Big\{v_{i} : i\in [n]\Big\}\cup \Big\{v_{i'} : i\in [n]\Big\}\cup \Big\{v_{i,j'} : i, j \in [n]\Big\}\setminus\{v_{0,0'}\}
\end{equation}
with edge set
\begin{align}
    E(GG_v) &= \Big\{ (v_i,v_{i+1} ) : i \in [n-1]\Big\}\cup \{(v_n,v_1)\}\nonumber\\
    &\hspace{5em} \cup \Big\{ (v_{i'},v_{(i+1)'} ) : i \in [n-1]\Big\}\cup \{(v_{n'},v_{1'})\}\nonumber\\
&\hspace{5em}\cup \left[\bigcup_{i\in [n]} \Big\{(v_i,v_{i,1'})\Big\}\cup \Big\{(v_{i,j'},v_{i,(j+1)'}) : j \in [n-1]\Big\}\right]\nonumber\\
&\hspace{5em}\cup \left[\bigcup_{i\in [n]} \Big\{(v_{i'},v_{1,i'})\Big\}\cup \Big\{(v_{j,i'},v_{j+1,i'}) : j \in [n-1]\Big\}\right]\nonumber\\
&\hspace{5em}\cup  \Big\{(v_{i,n'},v_{n,i'}) : i \in [n]\Big\}\cup \{(v_1,v_{1'})\,.
\end{align}
\end{mydef}

For later convenience we also define the following subsets of edges for $i \in [n]$
\begin{align}
    % E^{\rm left}_{i,j'} &=\Big\{(v_i,v_{i,1'})\Big\}\cup \Big\{(v_{i,k'},v_{i,(k+1)'}) : k \in [j]\Big\} \quad \mathrm{for} \quad i,j\in[n]\\
    % E^{\rm right}_{i,j'} &=\Big\{(v_{i,k'},v_{i,(k+1)'}) : k \in [j,n-1]\Big\} \quad \mathrm{for} \quad i,j\in[n]\\
    E^{\rm hor}_i[j',k'] &=\Big\{(v_{i,l'},v_{i,(l+1)'}) : l \in [j-1,k-1]\Big\} \quad \mathrm{for} \quad i,j,k\in[n]\label{eq:horver1}\\
    % E^{\rm down}_{i',j} &= \Big\{(v_{i'},v_{1,i'})\Big\}\cup \Big\{(v_{k,i'},v_{k+1,i'}) : k \in [j]\Big\} \quad \mathrm{for} \quad i,j\in[n]\\
    % E^{\rm up}_{i'j} &= \Big\{(v_{k,i'},v_{k+1,i'}) : k \in [j,n-1]\Big\} \quad \mathrm{for} \quad i,j\in[n].
    E^{\rm ver}_{i'}[j,k] &= \Big\{(v_{l,i'},v_{l+1,i'}) : l \in [j-1,k-1]\Big\} \quad \mathrm{for} \quad i,j,k\in[n].\label{eq:horver2}
\end{align}
where $v_{i,0'}=v_i$ and $v_{0,i'}=v_{i'}$.
% Informally, with respect to \cref{fig:grid_gadge} the edges in $E^{\rm left}_{i,j'}$ ($E^{\rm right}_{i,j'}$) are the horizontal edges at level $i$ which are left (right) of $j'$.
% Similarly, the edges in $E^{\rm down}_{i,j'}$ ($E^{\rm up}_{i,j'}$) are the vertical edges at level $i$ which are below (above) of $j'$.
Informally, with respect to \cref{fig:grid_gadget} the edges in $E^{\rm hor}_i[j',k']$ are the horizontal edges at level $i$ which are right of $j'$ and left of $k'$.
Similarly, the edges in $E^{\rm ver}_{i'}[j,k]$ are the vertical edges at level $i'$ which are above $j$ below $k$.
% Note that all these sets are disjoint.

\begin{figure}[H]
    \centering
    \includegraphics[width=0.8\textwidth]{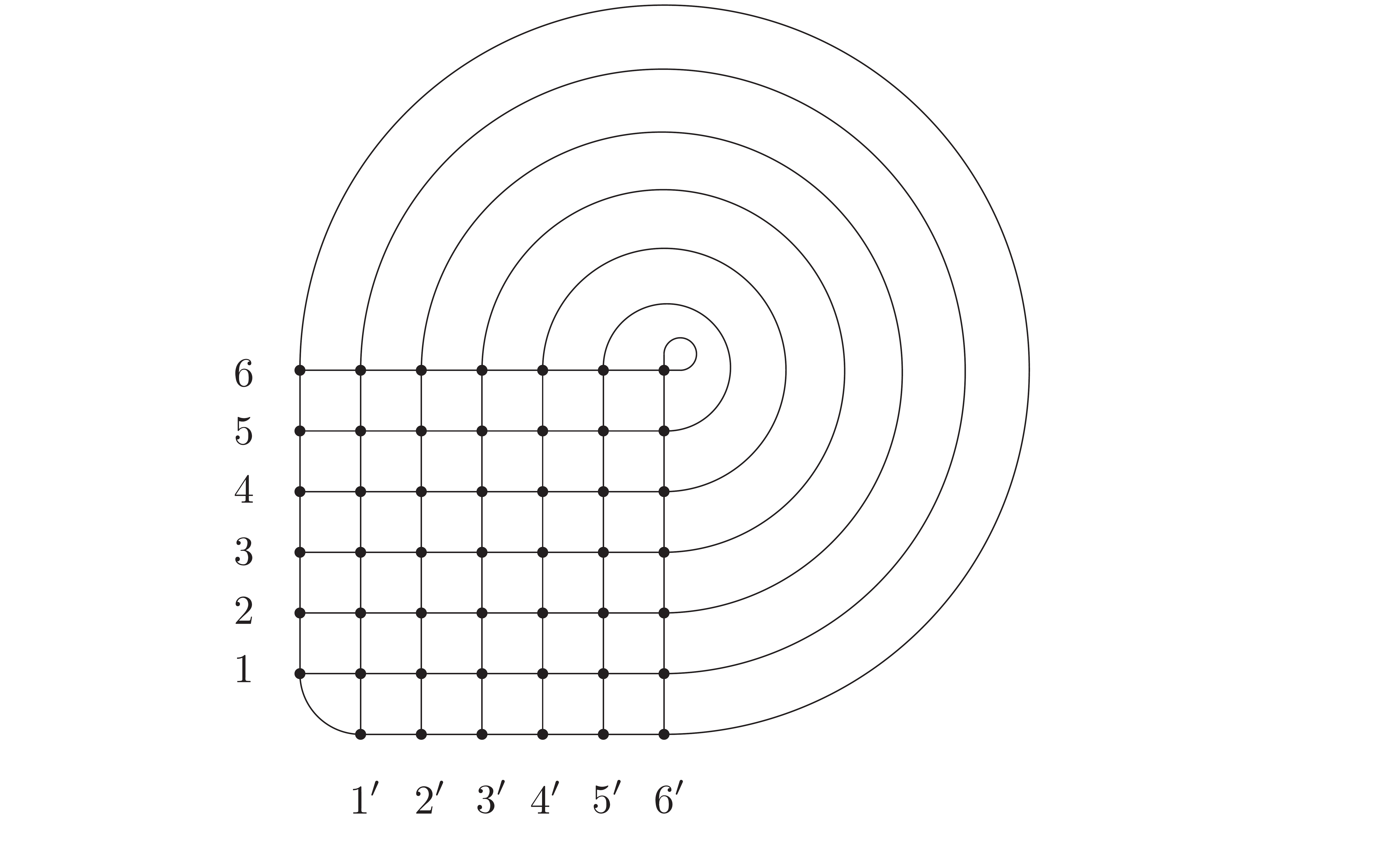}
    \caption{An example of the grid gadget defined in \cref{def:grid_gadget}.}
    \label{fig:grid_gadget}
\end{figure}

Note that all the vertices in the grid graph gadget $GG_v$ have degree $4$ or $3$, with only the vertices on the 'outside' (associated to the incident edges of the vertex $v$) having degree $3$.

% Now let $W$ be the set $\{i : i\in [n]\}\cup \{i' : i\in [n]\}$.
% To simplify the notation in the lemma below we introduce the following bijective map $\alpha$ from $W$ to $W$.
% \begin{equation}\label{eq:map_alpha}
% 	\alpha(i) = i',\quad \alpha(i') = i.
% \end{equation}
% Note that $\alpha$ is an involution, i.e. $\alpha(\alpha(i))=i$ and furthermore that
% \begin{equation}\label{eq:less_alpha}
% 	i < j\;\Leftrightarrow\;\alpha(i) < \alpha(j).
% \end{equation}

We will now prove that the grid gadget, in a specified sense, provides all-to-all connectivity.

\begin{lem}\label{lem:alltoall}
Let $v$ be a vertex in a graph with $2n$ incident edges labeled $1,\ldots, n, 1', \ldots n'$ and let $GG_v$ be the associated grid graph gadget as defined in \cref{def:grid_gadget}.
Consider an arbitrary pairing of the vertices $v_1,\ldots, v_n,v_{1'},\ldots,v_{n'}$ consisting of tuples of the form $(v_i,v_j)$ (unprimed pairs),  $(v_k,v_{l'})$ (mixed pairs) and $(v_{m'},v_{n'})$ (primed pairs).
Then there exist edge-disjoint paths $P_1,\ldots P_n$ that connect all pairs and moreover contain all edges in $GG_v$.
\end{lem}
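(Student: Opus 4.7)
The plan is to construct the $n$ edge-disjoint paths explicitly, using the column--row structure of the gadget. First, I would partition the pairing into three types: unprimed-unprimed pairs $(v_i, v_j)$, primed-primed pairs $(v_{k'}, v_{l'})$, and mixed pairs $(v_i, v_{k'})$. A parity observation shows that the number of unprimed-unprimed pairs equals the number of primed-primed pairs (call this number $u$), so $n - 2u$ pairs are mixed. The fundamental routing primitive for a mixed pair $(v_i, v_{k'})$ is the ``L-shaped'' path that descends column $i$ via the edges in $E^{\rm hor}_i[0',k']$ to the grid intersection $v_{i,k'}$, then traverses row $k'$ via the edges in $E^{\rm ver}_{k'}[0,i]$ to reach $v_{k'}$. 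The intersection $v_{i,k'}$ is the natural meeting point because it lies simultaneously on the column emanating from $v_i$ and the row emanating from $v_{k'}$.

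Second, I would extend this idea to same-type pairs by routing them through the outer cycles combined with a dedicated column/row traversal through the grid, using the wraparound edges $(v_{i,n'}, v_{n,i'})$ to bridge the ``bottom'' of column segments with the ``right'' of their corresponding rows, so that no grid edge is left uncovered. The extra edge $(v_1, v_{1'})$ would be used to absorb a parity imbalance that arises whenever $u \geq 1$. I would verify edge-coverage via a region-by-region accounting, partitioning $E(GG_v)$ into outer cycle edges, column connectors $(v_i, v_{i,1'})$, row connectors $(v_{k'}, v_{1,k'})$, horizontal grid-interior edges, vertical grid-interior edges, wraparound edges, and the extra edge, and checking that each path's contribution in each region sums to the region's edge count.

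The main obstacle is ensuring that the paths are simultaneously edge-disjoint, cover every edge, and remain simple. The wraparound edges $(v_{i,n'}, v_{n,i'})$ introduce a nontrivial topological twist (they do not match the natural planar embedding of the grid), and coordinating same-type pair routings --- which must traverse longer portions of the gadget --- is delicate because any mis-routing leaves either an uncovered edge or forces two paths to share an edge. I would address this by induction on $u$: in the base case $u = 0$, the L-shaped paths for the $n$ mixed pairs already account for the bulk of the grid, and I would show that the leftover column tails, row tails, outer-cycle edges, and wraparound edges can be absorbed by extending each L-shaped path around a canonical detour; in the inductive step, I would pair one unprimed-unprimed pair with one primed-primed pair by routing them jointly through a shared column-row slab (with the extra edge $(v_1, v_{1'})$ closing the parity gap), thereby reducing to the case of $u-1$ same-type pairs of each kind. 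The ``moreover covers every edge'' clause is the binding constraint that forces the detours to be tight, and verifying this tightness case-by-case is the heart of the combinatorial work.
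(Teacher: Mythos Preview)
Your L-shaped routing for mixed pairs $(v_i,v_{k'})$ via $E^{\rm hor}_i[0',k']\cup E^{\rm ver}_{k'}[0,i]$ is exactly what the paper does, and your observation that unprimed--unprimed pairs and primed--primed pairs occur in equal numbers and should be matched up is also the paper's idea: for the $x$'th unprimed pair $(v_{i_x},v_{j_x})$ and the $x$'th primed pair $(v_{m'_x},v_{n'_x})$, the paper routes both through the common grid vertex $v_{i_x,m'_x}$ (the unprimed path goes right--up--left, the primed path goes up--right--down), which is close in spirit to your ``shared column--row slab.''

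Where you diverge is in the treatment of the ``contain all edges'' clause, and this is where your plan becomes unnecessarily hard and under-specified. The paper does \emph{not} try to build full-coverage paths directly. It first builds the simple edge-disjoint paths above, then observes that the graph $G_{\rm rem}$ of unused edges has all even degrees (since $GG_v$ is $4$-regular internally and each constructed path contributes even degree at every vertex), so each connected component of $G_{\rm rem}$ is Eulerian; moreover each component shares a vertex with one of the already-constructed paths. One then splices a Eulerian tour of each component into a single path at the shared vertex. This one-line trick absorbs the outer cycles, the wraparound edges $(v_{i,n'},v_{n,i'})$, the extra edge $(v_1,v_{1'})$, and all leftover grid edges in one stroke, with no case analysis.

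Your proposed route---explicit detours, region-by-region edge accounting, and induction on $u$---is not obviously wrong, but it is a much harder program and your sketch does not make it credible. In particular, the induction on $u$ does not reduce to a smaller instance of the same lemma (the grid size $n$ is fixed), so you would need a strengthened inductive hypothesis describing exactly which edges remain available after routing one unprimed/primed pair-of-pairs; you do not state one. The ``canonical detour'' for the base case $u=0$ is also unspecified, and making $n$ L-shaped paths absorb all outer-cycle and wraparound edges while remaining edge-disjoint is delicate. Finally, you flag ``remain simple'' as a constraint; note that after the Eulerian-tour splice the paper's paths may revisit vertices (they are trails), so insisting on simple paths would over-constrain you relative to what the lemma actually needs for the downstream reduction.
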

\begin{proof}
Assume without of loss of generality that $i<j$ for the unprimed pairs and $m'<n'$ for the primed pairs (i.e. the first index is always smaller).
We will prove the lemma by explicit construction of the paths $P_1,\ldots, P_n$
Let $M$ be the list of mixed pairs, $L$ be the list of unprimed pairs and $L'$ be the list of primed pairs.
Note that $L$ and $L'$ are necessarily of equal length.
Consider first the list of mixed pairs $M$.
Construct the paths $P_{k,l'}$ by the following algorithm.
An example of these paths is shown in \cref{fig:grid_gadget_example}.

\begin{algorithm}
    \caption{Algorithm to construct the paths $P_{k,l'}$.}
    \label{alg:Ps_mixed}
    \begin{algorithmic}
        % \State Order $M$ such that the first entry of each pair is in ascending order
        \For {$x \in [K]$ where $K$ is the length of $M$}
            \State Create the path $P_{k_x,l_x'}$ associated to the $x$'th pair in $M$ by:
            \State\hspace{3em} (1) Walking horizontally rightwards from $v_{k_x}$ to $v_{k_x,l'_x}$
            \State\hspace{3em} (2) Walking vertically downwards from $v_{k_x,l'_x}$ to  $v_{l'_x}$
        \EndFor
    \end{algorithmic}
\end{algorithm}

%For every pair $(v_i,v_{j'})$ construct the path $P_{i,j'}$ by starting at $v_i$, walking horizontally to the vertex $v_{i,j'}$ and then walking vertically to $v_{j'}$ connecting the pair.
Note that for every pair we have that
\begin{equation}
    % E(P_{i,j'})\subset E^{\rm left}_{i,j'}\cup E^{\rm down}_{j',i}
	E(P_{k,l'}) = E^{\rm hor}_{k}[0,l']\cup E^{\rm ver}_{l'}[0,k]
\end{equation}
where $E^{\rm hor}_{k}[0,l']$ and $E^{\rm ver}_{l'}[0,k]$ are defined in \cref{eq:horver1,eq:horver2}.
Note also that since these sets are all disjoint the paths $P_{k,l'}$ are all mutually edge-disjoint.\\

Now consider the lists of of primed and unprimed pairs $L$ and $L'$.
We will construct paths $P_{i,j}$ (unprimed) and $P_{m',n'}$ (primed) using the following algorithm.

% \begin{algorithm}
%     \caption{Algorithm to construct the paths $P_{i,j}$ and $P_{i',j'}$.}
%     \label{alg:Ps}
%     \begin{algorithmic}
%         % \State Order $L$ and $L'$ such that the first entry of each pair is in ascending order
%         \For {$k \in [K]$ where $K$ is the length of $L$}
%             \State Create the path $P_{i_k,j_k}$ associated to the $k$'th pair in $L$ by:
% 	    \State\hspace{3em} (1) Walking horizontally rightwards from $v_{i_k}$ to $v_{i_k,\alpha(i_k)}$
% 	    \State\hspace{3em} (2) Walking vertically upwards from $v_{i_k,\alpha(i_k)}$ to  $v_{j_k,\alpha(i_k)}$
% 	    \State\hspace{3em} (3) Walking horizontally leftwards from $v_{j_k,\alpha(i_k)}$ to $v_{j_k}$
%             \State Create the path $P_{i_k',j_k'}$ associated to the $k$'th pair in $L'$ by:
% 	    \State\hspace{3em} (1) Walking vertically upwards from $v_{i'_k}$ to $v_{\alpha(i'_k),i'_k}$
% 	    \State\hspace{3em} (2) Walking horizontally rightwards from $v_{\alpha(i'_k),i'_k}$ to  $v_{\alpha(i'_k),j'_k}$
% 	    \State\hspace{3em} (3) Walking vertically downwards from $v_{\alpha(i'_k),j'_k}$ to $v_{j'_k}$
%         \EndFor
%     \end{algorithmic}
% \end{algorithm}

\begin{algorithm}
    \caption{Algorithm to construct the paths $P_{i,j}$ and $P_{i',j'}$.}
    \label{alg:Ps}
    \begin{algorithmic}
        % \State Order $L$ and $L'$ such that the first entry of each pair is in ascending order
	    \For {$x \in [K]$ where $K$ is the length of $L$} \Comment{$L$ and $L'$ are necessarily of the same length.}
            \State Create the path $P_{i_x,j_x}$ associated to the $k$'th pair in $L$ by:
	    \State\hspace{3em} (1) Walking horizontally rightwards from $v_{i_x}$ to $v_{i_x,m'_x}$
	    \State\hspace{3em} (2) Walking vertically upwards from $v_{i_x,m'_x}$ to  $v_{j_x,m'_x}$
	    \State\hspace{3em} (3) Walking horizontally leftwards from $v_{j_x,m'_x}$ to $v_{j_x}$
            \State Create the path $P_{m_x',n_x'}$ associated to the $x$'th pair in $L'$ by:
	    \State\hspace{3em} (1) Walking vertically upwards from $v_{m'_x}$ to $v_{i_x,m'_x}$
	    \State\hspace{3em} (2) Walking horizontally rightwards from $v_{i_x,m'_x}$ to  $v_{i_x,n'_x}$
	    \State\hspace{3em} (3) Walking vertically downwards from $v_{i_x,n'_x}$ to $v_{n'_x}$
        \EndFor
    \end{algorithmic}
\end{algorithm}

\begin{figure}[H]
    \centering
    \includegraphics[width=0.8\textwidth]{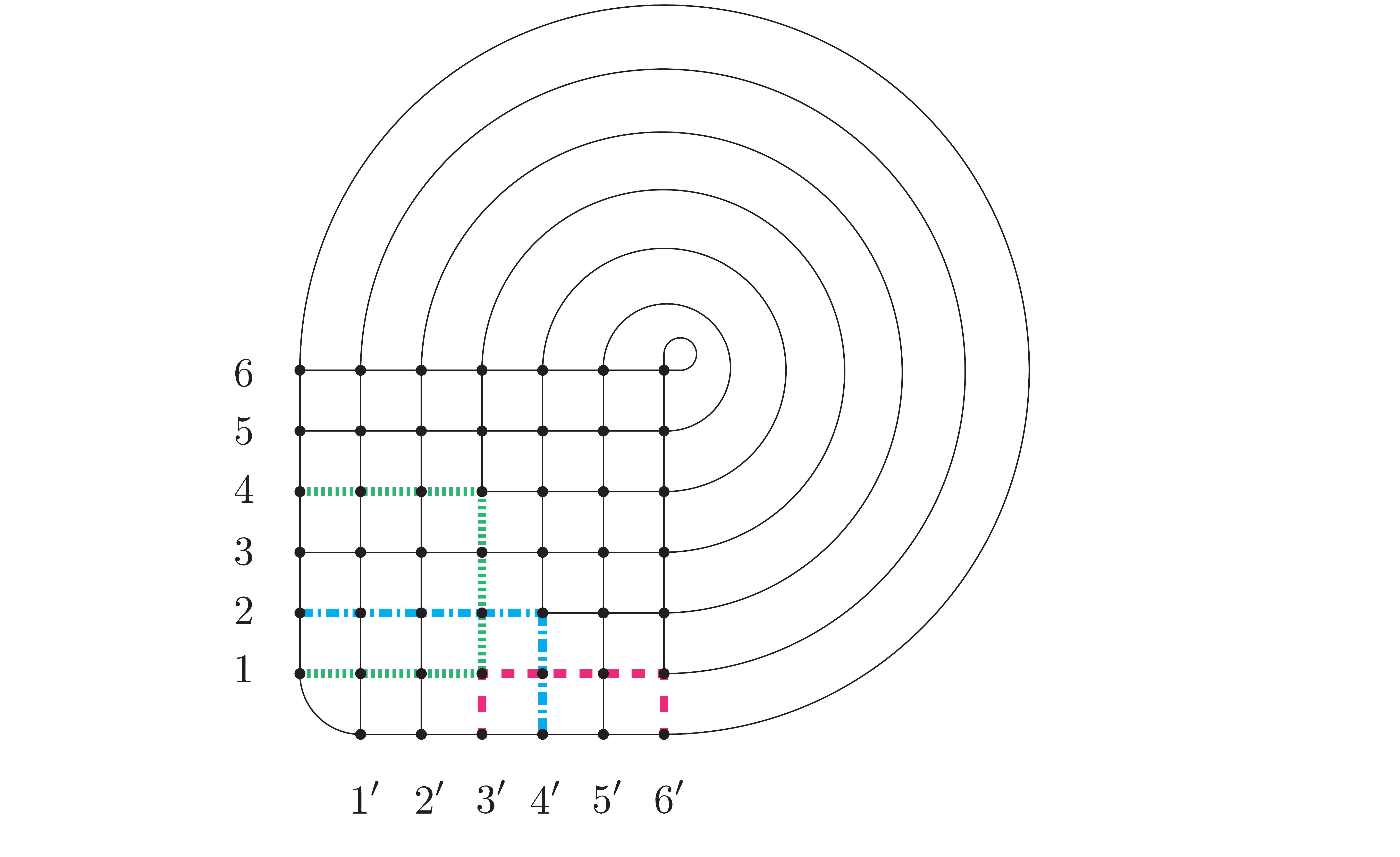}
    \caption{An example of paths produced by \cref{alg:Ps_mixed,alg:Ps} on the grid gadget defined in \cref{def:grid_gadget}.
    In this example one of the mixed pairs in $M$ is $(2,4')$ giving rise to the blue dashed-dotted path.
    Furthermore, the first unprimed pair in $L$ is $(1,4)$ and the first primed pair in $L'$ is $(3',6')$, giving rise to the green dotted path and the red dashed path respectively, meeting at $v_{2,3'}$.
    Note that the $x$'th primed and unprimed paths always meet at the vertex $v_{i_x,m'_x}$ where $v_{i_x}$ and $v_{m'_x}$ is the first vertex in the $x$'th primed and unprimed pair respectively.
    }
    \label{fig:grid_gadget_example}
\end{figure}

Now note that the for all $x \in [K]$ (where $K$ is the length of $L$) we have that 
\begin{align}
	% E(P_{i_k,j_k}) &\subset E^{\rm hor}_{i_k}\cup E^{\rm ver}_{\alpha(i_k)}\cup E^{\rm hor}_{j_k}\label{eq:collis1}\\
	E(P_{i_x,j_x}) &= E^{\rm hor}_{i_x}[0,m'_x]\cup E^{\rm ver}_{m'_x}[i_x,j_x]\cup E^{\rm hor}_{j_x}[0,m'_x]\label{eq:collis1}\\
	% E(P_{i_k,j_k}) &\subset E^{\rm left}_{i_k,\alpha(i_k)}\cup \Big(E^{\rm up}_{\alpha(i_k),i_k} \setminus\Big)\cup E^{\rm left}_{j_k,\alpha(i_k)}\label{eq:collis1}\\
	% E(P_{i_k',j_k'}) &\subset E^{\rm ver}_{i_k'}\cup E^{\rm hor}_{\alpha(i'_k)}\cup E^{\rm ver}_{j_k'}\label{eq:collis2}
	E(P_{m_x',n_x'}) &= E^{\rm ver}_{m'_x}[0,i_x]\cup E^{\rm hor}_{i_x}[m'_x,n'_x]\cup E^{\rm ver}_{n'_x}[0,i_x]\label{eq:collis2}
	% E(P_{i_k',j_k'}) &\subset E^{\rm down}_{i_k',\alpha(i'_k)}\cup E^{\rm right}_{\alpha(i'_k),i'_k}\cup E^{\rm down}_{j_k',\alpha(i'_k)}\label{eq:collis2}
\end{align}
% This immediately implies that all $P_{i_x,j_x}$ are mutually edge-disjoint, all $P_{m'_x,n'_x}$ are mutually edge-disjoint and moreover that no $P_{i_x,j_x}$ nor any $P_{m'_x,n'_x}$ share edges with any of the mixed-pair paths $P_{k,l'}$.
This immediately implies that all $P_{i_x,j_x}$ and $P_{m'_x,n'_x}$ are mutually edge-disjoin and moreover that no $P_{i_x,j_x}$ nor any $P_{m'_x,n'_x}$ share edges with any of the mixed-pair paths $P_{k,l'}$.
This last point can be seen by noting that an unprimed vertex $v_i$ can either be part of a mixed pair or an unprimed pair but not both at the same time (with a similar argument for the primed vertices).

Hence we have constructed a set of edge-disjoint paths that connect all vertex pairs.
However they do not yet contain all edges in $E(GG_v)$.
It is however straightforward to extend the paths to include all remaining edges.
To see this consider the grid graph gadget $GG_v$ and remove all edges that are contained in one of the paths constructed above.
What remains is a not necessarily connected graph $G_{\rm rem}$ of which the connected components, by construction, share a vertex with at least one of the constructed paths.
Moreover, all these graphs will be Eulerian.
Now choose for each graph $G_{\rm rem}$ a Eulerian tour $U$ and insert it into exactly one of the paths that shares a vertex with $G_{\rm rem}$.
The resulting set of paths will still have mutually edge disjoint elements (since a Eulerian tour is edge-disjoint by definition) and furthermore the union of all the paths in the set contains all edges in the grid graph gadget $GG_v$.
This completes the proof.
\end{proof}

We will now make use of \cref{lem:alltoall} to map instances of \EDP\ to instances of $4$-regular \EDPDT.
We will first construct a mapping from arbitrary demand graphs $D$ to demand graphs of the form $K_2^{\times k}$.
Then, to make the graph $G\cup D$ $4$-regular, we replace any higher-degree vertices in $G$ with the grid gadget above.
Using \cref{lem:alltoall}, we can prove that this puts no restriction on the possible paths.

\begin{thm}\label{thm:EDP2EDPDT}
Let $G$ and $D$ be graphs such that $V(D)\subseteq V(G)$ and $G\cup D$ is Eulerian.
There exist graphs $G''$ and $D'$ such that $D' = K_2^{\times k}$ where $k =\abs{E(D)}$, $G''\cup D'$ is $4$-regular and $(G,D)$ is a YES-instance of the EDP problem if and only if $(G'', D')$ is.

\end{thm}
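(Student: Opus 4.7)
I would construct $(G'', D')$ from $(G, D)$ in two stages, each preserving the EDP answer and the Eulerian property.

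\emph{Stage 1: force $D' = K_2^{\times k}$.} For every edge $e = (u, v) \in E(D)$, introduce two fresh vertices $u_e, v_e$, place the edge $(u_e, v_e)$ into a new demand graph $D'$, and add to the primal graph three parallel edges $(u, u_e)$ together with three parallel edges $(v, v_e)$; call the resulting graph $G'$. Then $D' = K_2^{\times k}$ with $k = \abs{E(D)}$, the new terminals satisfy $\deg_{G'}(u_e) = \deg_{G'}(v_e) = 3$, and at every pre-existing $u \in V(D)$ the degree shifts by $-\deg_D(u) + 3 \deg_D(u) = 2 \deg_D(u)$, which preserves parity and keeps $G' \cup D'$ Eulerian. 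A YES-witness on $(G, D)$ lifts to one on $(G', D')$ by extending every circuit $C_e$ through one of the newly added parallel edges at each end; conversely any circuit covering $(u_e, v_e)$ in $G' \cup D'$ must enter and leave $u_e$ through exactly one of the three $(u, u_e)$-edges, and analogously at $v_e$, so collapsing the triples recovers a YES-witness on $(G, D)$.

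\emph{Stage 2: make $G' \cup D'$ $4$-regular.} Every non-terminal vertex $w$ of $G'$ has some even degree $2n$, since $G' \cup D'$ is Eulerian and $w \notin V(D')$. For each such $w$ with $2n > 4$ I would replace $w$ by the grid gadget $GG_w$ of \cref{def:grid_gadget}, identifying its $2n$ outside vertices $w_1, \dots, w_n, w_{1'}, \dots, w_{n'}$ with the $2n$ half-edges previously incident to $w$. Inside each gadget every vertex has degree $3$ or $4$, the outside vertices (degree $3$ internally) absorb their one external edge to reach degree $4$, and the terminals $u_e, v_e$ (degree $3$ already) are left untouched; hence $G'' \cup D'$ is $4$-regular. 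The equivalence $(G', D')$ YES iff $(G'', D')$ YES follows from \cref{lem:alltoall}: any YES-witness on $(G', D')$ induces at each replaced $w$ a pairing of its $2n$ incident edge-endpoints, which by the lemma can be realised inside $GG_w$ by mutually edge-disjoint paths that in fact cover every edge of the gadget, so stitching these local routings into the external circuits produces a YES-witness on $(G'', D')$. Conversely, a YES-witness on $(G'', D')$ restricts at each gadget to a pairing of its outside vertices, and collapsing each gadget back to its center $w$ reassembles the circuits on $G' \cup D'$.

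\emph{Main obstacle.} The step I expect to be most delicate is the lifting direction of Stage~2: although \cref{lem:alltoall} handles one gadget at a time, one has to verify that the local routings chosen independently at different gadgets fit together without global edge conflicts. This works because the interiors of distinct gadgets live on disjoint vertex sets and therefore share no edges, while the only edges shared between a gadget and the rest of $G''$ are its external edges, which are already accounted for by the pairing data. A final size check ($O(\abs{E(D)})$ new vertices in Stage~1 and $O\!\left(\sum_w \deg_{G'}(w)^2\right)$ new vertices in Stage~2) then confirms that the reduction is polynomial, completing the proof.
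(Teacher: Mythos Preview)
Your two-stage reduction matches the paper's proof almost exactly: first splice a local gadget onto each demand edge to make the terminals disjoint, then replace every high-degree vertex by the grid gadget $GG_v$ and invoke \cref{lem:alltoall}. Your Stage~1 gadget is a mild simplification of the paper's (two fresh vertices $u_e,v_e$ per demand edge, with triple edges directly to $u$ and $v$, rather than the paper's four fresh vertices $x^u_e,x'^u_e,x'^v_e,x^v_e$), but the correctness argument goes through for the same reason: a circuit through the new demand edge is forced to pass through $u$ and $v$, so collapsing recovers the original witness. The only consequence of your shortcut is that the degree of each old terminal $u$ grows by $2\deg_D(u)$ rather than staying fixed, which is harmless since Stage~2 absorbs arbitrary even degrees anyway.

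One small gap: in Stage~2 you only replace vertices with $2n>4$, but $G'\cup D'$ can still contain non-terminal vertices of degree~$2$ (any such vertex already present in $G\setminus V(D)$ survives unchanged through your Stage~1). The paper handles these by attaching a self-loop, and you should do the same to actually obtain a $4$-regular $G''\cup D'$.
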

\begin{figure}[H]
    \centering
    \includegraphics[width=0.6\textwidth]{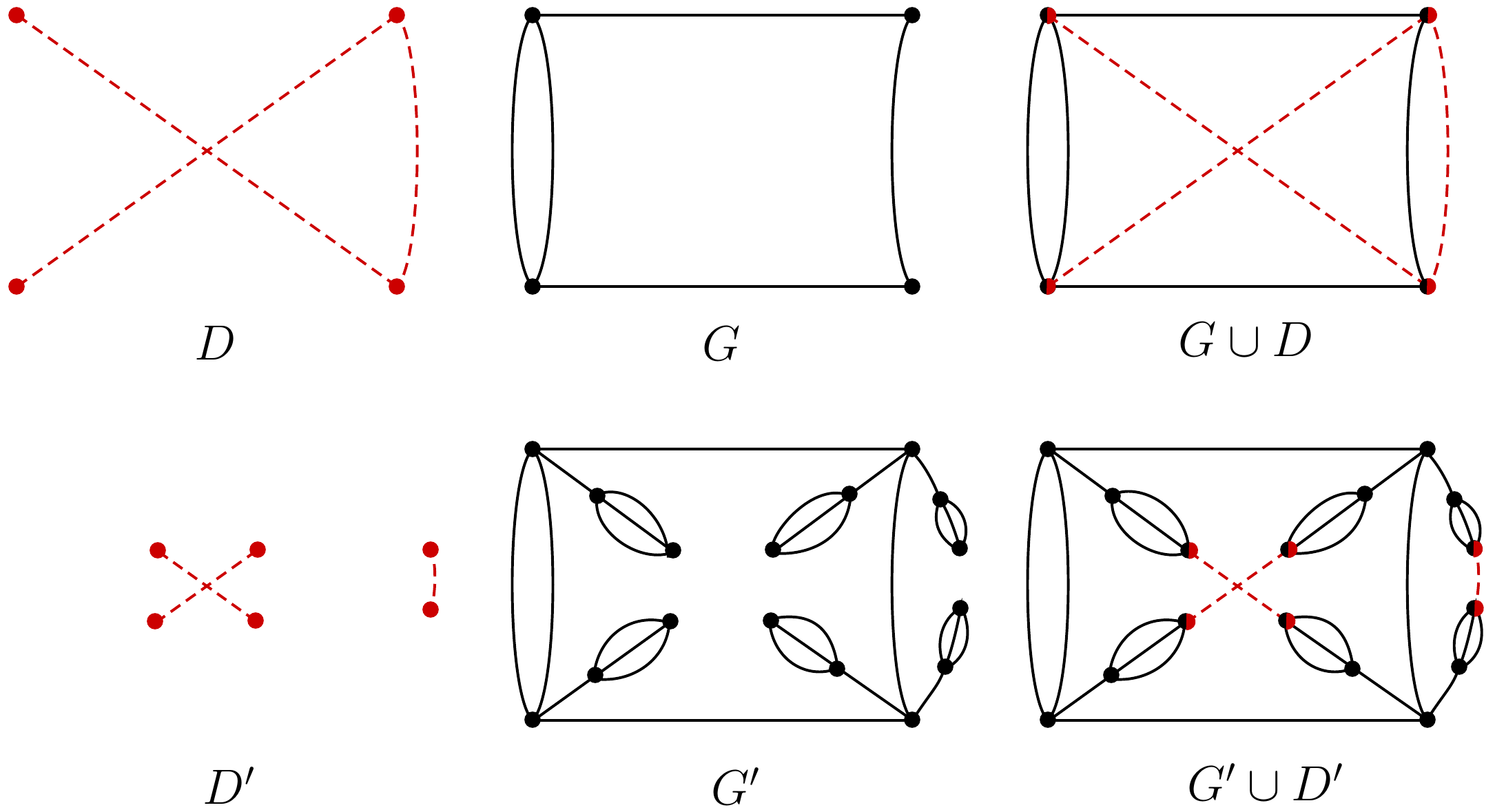}
    \caption{An example showing the mapping from an instance $(G,D)$ of EDP to an instance $(G',D')$ of EDPDT.
    Black solid edges are edges from $G$ or $G'$ and red dashed edges are edges from $D'$.}
    \label{fig:grape_expansion}
\end{figure}

\begin{proof}
    % Consider $(G,D)$ and let $\mathrm{deg}_D(v)$ denote the degree of $v$ with respect to $D$ for all vertices $v\in V(G)$.
    To prove the theorem we will construct an explicit mapping from the graphs $(G,D)$ to $(G'',D')$.
    We do this in two steps: (1) map $(G,D)$ to $(G',D')$ where $D'=K_2^{\times k}$ but $G'\cup D'$ is not necessarily 4-regular and (2) map $(G',D')$ to $(G'',D')$  such that $G''\cup D'$ is 4-regular.
    The first mapping is visualized in \cref{fig:grape_expansion} and formalized in \cref{eq:VGp,eq:EGp,eq:VDp,eq:EDp}.
    Informally, we replace each edge $e=(u,v)\in E(D)$ in the graph $G\cup D$ with the following gadget
    \begin{equation}\label{eq:grape_expandsion_gadget}
        \raisebox{-0.03\textwidth}{\includegraphics[scale=0.8]{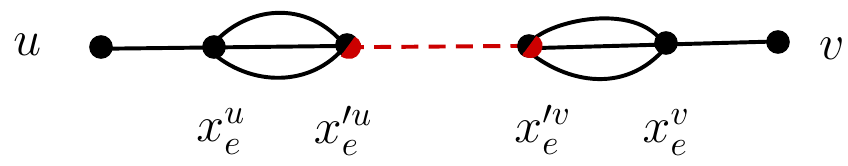}}.
    \end{equation}
    where $x^u_e$, $x'^u_e$, $x'^v_e$ and $x^v_e$ are all new vertices.
    Note that we label the vertex closer to the corresponding vertex $u$ in the edge $e$ without the prime and the further one with a prime.
    Formally, we define $G'$ to be
    \begin{align}
	    V(G') &= V(G)\cup\Bigg( \bigcup_{e=(u,v)\in E(D)} \{x^u_e, x'^u_e, x'^v_e, x^v_e\} \Bigg) \label{eq:VGp}\\
	    E(G') &= E(G)\cup\Bigg( \bigcup_{e=(u,v)\in E(D)} \Big\{(u,x^u_e),(x^u_e,x'^u_e),(x^u_e,x'^u_e),(x^u_e,x'^u_e), \nonumber\\
	      &\hspace{12em}(x'^v_e,x^v_e),(x'^v_e,x^v_e),(x'^v_e,x^v_e),(x^v_e,v)\Big\} \Bigg).\label{eq:EGp}
    \end{align}
    Note that we have added the edges $(x^u_e,x'^u_e)$ and $(x'^v_e,x^v_e)$ three times.
    % Now note that we can label the vertices $\{v_i\;\;|\;\; i\in[\ldots \mathrm{deg}_D(v)]$ be the edges of $D$ incident on $v$ (we called this edge multi-set $E_v^D$.
% We say $\{v_i\;\;\|\;\; i\in \{1,\ldots \mathrm{deg}_D(v)\}\}= \{v_e\;\;\|\;\; e\in E_v^D\}$.
    We now define the new demand graph $D'$ as
        \begin{align}
        % V(D') &= \bigcup_{v\in V(G)} \{v_i \;\;|\;\;i\in \{1,\ldots \mathrm{deg}_D(v)\}\} \label{eq:VDp}\\
        V(D') &= \bigcup_{e=(u,v)\in E(D)} \{x'^u_e, x'^v_e\} \label{eq:VDp}\\
	E(D') &= \bigcup_{e=(u,v)\in E(D)} \{(x'^u_e, x'^v_e)\} \label{eq:EDp}
        \end{align}
    Note that $D' = K_2^{\times k}$.
    Note also that $G'\cup D'$ is still Eulerian.
    However it is in general not $4$-regular.\\

    Now let $N$ be the set of vertices of $G'\cup D'$ of degree different than $4$.
    Note that, by construction $N \cap V(D') = \emptyset$.
    That is, all vertices of degree other that $4$ are exclusively vertices in $G'$.
    We will from $G'$ construct a graph $G''$ such that $G'' \cup D'$ is $4$-regular.
    For every vertex in $N$ which has degree 2, we simply add a self-loop to the vertex, making the vertex have degree 4 without changing any connectivity.
    Furthermore, for every vertex $v\in N$ with degree larger than 4, we replace it by the grid graph gadget $GG_v$ as defined in \cref{def:grid_gadget}, attaching the edges incident on $v$ to the vertices $v_1, \ldots v_n,v_{1'}, \ldots,v_{n'}$ (where $n={\rm deg}(v)/2$) in the grid graph gadget.
    Note the graph $G''\cup D'$ obtained after this procedure is $4$-regular.

    % For all $v\in V'$ define vertices $\{v^(w) \;\; \|\;\;w \in N_v(G')\}$ and let $C_{2n}$ be the cycle graph on this set.
    % Also let $R_{n-1}$ be the square grid graph on $(n-1)\times (n-1)$ vertices where $2n$ is the degree of $v$.
    % Attach $C_{2n}$ to the left and bottom sides of $R_{n-1}$ and connect each vertex on the right side of $R_{n-1}$ to a vertex on the top side of $R_{n-1}$.
    % This defines the grid graph gadget $GG_v$ associated to the vertex $v\in V'$.
    % See [fig] for an example of this construction.
    %     Now we define the graph $G''$ as
    %     \begin{align}
    %     V(G'') &= (V(G)/V')\cup \left(\bigcup_{v\in V'}  V(GG_v)\right)\\
    %     E(D') &= (E(G)/E(G[V']))\cup \left(\bigcup_{v\in V'}  E(GG_v)\right) \cup \left(\bigcup_{e = (v,w)\in E(G[V'])}(v^{(w)},w^{(v)})  \right).
    %     \end{align}
    %     Note that the graph $G''\cup D'$ is $4$-regular.\\

    To prove the theorem we now need to show that $(G,D)$ is a YES-instance to EDP if and only if $(G'',D')$ is a YES-instance.
    Again, we will do this in two steps: (1) first show that $(G,D)$ is a YES-instance if and only if $(G',D'$) is and (2) show that $(G',D')$ is a YES-instance if and only if $(G'',D')$ is.
    \begin{enumerate}
	    \item

    We now prove that $(G',D') $ is a YES-instance of EDP if and only $(G,D)$ is.
    First assume $(G,D)$ is a YES-instance of EDP.
    This means there exists for each $e = (u,v)\in E(D)$ a path $P_e$ on $G$ that begins and ends at the vertices $u,v$ such that all paths $P_e$ are mutually edge-disjoint.
    % Note now that the paths $P_e$ still exist on the graph $G$ and are moreover still mutually edge-disjoint.
    Now, for each $e\in E(D)$ define the path $P_e'$ on $G'$ as
        \begin{equation}\label{eq:edge_disjoint_path}
	% P_e' = x'^u_e(x'^u_e,x^u_e)x^u_e(x^u_e,x'^u_e)x'^u_e(x'^u_e,x^u_e)x^u_e(x^u_e,u)uP_ev(v,x^v_e)x^v_e(x^v_e,x'^v_e)x'^v_e(x'^v_e,x^v_e)x^v_e(x^v_e,x'^v_e)
	P_e' = x'^u_ex^u_ex'^u_ex^u_euP_evx^v_ex'^v_ex^v_ex'^v_e
        % P_e' = u_i(w'_i,u_i)w'_i(w'_i,u_i)u_i(w'_i,u_i)w_i'(u,w'_i)P_e(v,w_i)w_i(w_i,v_i)v_i(w_i,v_i)w_i(w_i,v_i)v_i
        \end{equation}
    where we have omitted writing out the edges that the path traverses for clarity.
    For a visual aid refer to \cref{eq:grape_expandsion_gadget} where we instead of starting the path at $u$ as in $P_e$ we start at $x'^u_e$, traverse the edge $(x^u_x,x'^u_e)$ back and fourth three times and then move to $u$ using the edge $(x^u_e,x'^u_e)$.
    From $u$ the path $P_e'$ is the same as $P_e$ and when arriving at $v$ we instead end at $x'^v_e$ similarly to how we started at $x'^u_e$.
    Thus $P'_e$ is a path connecting the vertices $x'^u_e,x'^u_e$.
    Note that by definition $(x'^u_e,x'^u_e)$ is an edge in the demand graph $D'$ (precisely corresponding to the edge $(u,v)\in D$).
    The paths $P_e'$  are also mutually edge-disjoint, since the paths $P_e$ are.
    Hence $(G',D')$ is a YES-instance.

    For the other direction, assume that $(G',D')$ is a YES-instance of EDP and thus that there exists edge-disjoint paths $P_e'$ connecting the vertices $x'^u_e,x'^v_e$ for all $e=(x'^u_e,x'^v_e)\in E(D')$.
    One can then see that $P'_e$ is either of the form as in \cref{eq:edge_disjoint_path} or as
	\begin{equation}
	P_e' = x'^u_ex^u_euP_evx^v_ex'^v_e.
	\end{equation}
	This means that the associated path $P_e$ forms a path between $u$ and $v$ in $G$ for all $e=\in E(D')$.
	Furthermore, since all $P'_e$ are pairwise edge-disjoint, so are the $P_e$.
    Hence $(G,D)$ is also a YES-instance of EDP.\\
    
    \item

    Next we argue that $(G',D')$ is a YES-instance of EDP if and only if $(G'',D')$ is.
If $(G'',D')$ is a YES-instance of EDP then so is $(G',D')$, since any edge-disjoint paths passing through a grid graph gadget $GG_v$ can also be made into edge-disjoint paths passing through the vertex $v$.
Hence assume that $(G',D')$ is a YES-instance of EDP.
    Note that the only difference between $(G',D')$ and $(G'',D')$ is the replacement of vertices $v\in N$ with the grid graph gadget $GG_v$.
    Moreover recall that $N\cap V(D')=\emptyset$.
Finally, by \cref{lem:alltoall}, we know that any ${\rm deg}(v)/2$ paths passing through a vertex $v$ can be mapped to ${\rm deg}(v)/2$ paths passing through $GG_v$, and that these paths are mutually edge-disjoint and use all edges in $GG_v$.
This implies that if $(G',D')$ is a  YES-instance of EDP, then so is $(G'',D')$.
This proves the theorem.
    \end{enumerate}
\end{proof}

We can now prove \cref{cor:edpnpc}.
\begin{proof}[Proof of \cref{cor:edpnpc}]
    \Cref{thm:EDP2EDPDT} states that there exists a many-one reduction from Eulerian \EDP\ to $4$-regular \EDPDT.
    Furthermore, this reduction consists of constructing the graphs $(G'',D')$ from the graphs $(G,D)$ by the explicit rules in \cref{eq:VGp,eq:EGp,eq:VDp,eq:EDp} and by replacing vertices of degree more than $4$ with the grid graph gadget, which is clearly polynomial.
    This shows that $4$-regular \EDPDT\ is \NP-Hard.
    Since any instance of $4$-regular \EDPDT\ is also an instance of the general \EDP\, which is in \NP, also $4$-regular \EDPDT\ is in \NP and thus \NP-Complete.
\end{proof}

\end{document}